\newcommand{\msize}[1]{\left|#1\right|}
\DeclareMathOperator{\closure}{cl}
\newtheorem{theorem}{Theorem}
\newtheorem{lemma}[theorem]{Lemma}
\newtheorem{obs}[theorem]{Observation}
\title{Efficient Segment Folding is Hard\thanks{This work was started during the FWF--JSPS Japan--Austria Bilateral Seminar ``Computational Geometry Seminar with Applications to Sensor Networks" held in Zao (Japan) in 2018. The authors want to thank all participants for the fruitful discussions.
We want to thank the anonymous reviewers for the insightful comments.}}
\author[1]{Takashi~Horiyama\thanks{Email: horiyama@ist.hokudai.ac.jp}}
\author[2]{Fabian~Klute\thanks{Email: f.m.klute@uu.nl. Supported by the Netherlands Organisation for Scientific Research (NWO) under project no. 612.001.651 and by the Austrian Science Fund (FWF): J-4510.}}
\author[3]{Matias~Korman\thanks{Email: matias.korman\_cozzetti@siemens.com.}}
\author[4]{Irene~Parada\thanks{Email: irmde@dtu.dk. Research partially supported by the Austrian Science Fund (FWF): W1230 and by Independent Research Fund Denmark grant 2020-2023 (9131-00044B) ``Dynamic Network Analysis''.}}
\author[5]{Ryuhei~Uehara\thanks{Email: uehara@jaist.ac.jp. Supported by MEXT Kakenhi No.~17H06287 and 18H04091.}}
\author[6]{Katsuhisa~Yamanaka\thanks{Email: yamanaka@cis.iwate-u.ac.jp}}
\affil[1]{Faculty of Information Science and Technology, Hokkaido University, Sapporo, Japan}
\affil[2]{Utrecht University, Utrecht, The Netherlands}
\affil[3]{Siemens EDA (formerly Mentor Graphics), OR, USA}
\affil[4]{Technical University of Denmark, Lyngby, Denmark}
\affil[5]{School of Information Science, JAIST, Ishikawa, Japan}
\affil[6]{Faculty of Science and Engineering, Iwate University, Iwate, Japan}
\date{}
\begin{document}
	
	\maketitle

\begin{abstract}
	We introduce a computational origami problem which we call the {\em segment folding} problem: 
	given a set of $n$ line-segments in the plane the aim is to make creases along all segments in the minimum number of folding steps. 
	Note that a folding might alter the relative position between the segments, 
	and a segment could split into two. 
	We show that it is NP-hard to determine whether $n$ line segments can be folded in $n$ simple folding operations.
\end{abstract}

\section{Introduction}
Origami designers around the world struggle
with the problem of finding a better way to fold an origami model.
Recent advanced origami models require substantial \emph{precreasing} of
a prescribed mountain-valley pattern (getting each crease folded slightly
in the correct direction), and then folding all the creases at once.
For example, for folding the MIT seal \emph{Mens et Manus} in ``three easy steps'', Chan~\cite{Chan} spent roughly three hours precreasing, three hours folding those creases, and four hours of artistic folding.
The precreasing component is particularly tedious.
Thus, we wonder if this process can be automated by folding robots. As of the writing of this paper, the most recent robots for folding paper can achieve
only quite simple foldings (see, e.g., \cite{OriRobo}), but we consider a future setting in which more difficult ones are possible. In such a setting, we have a series of portions of the sheet that need to be folded in some specified locations, and we would like to do it as fast as possible (that is, in the minimum possible number of foldings).

We consider one of the simplest folding operations possible called \emph{all-layers simple fold}.
An all-layers simple fold starts with a sheet in a flat folded state 
and consists of the following three steps:
(1) choose a crease line, 
(2) fold all paper layers along this line, and
(3) crease to make all paper layers flat again.

In our model we start with a plain sheet of paper with no folds or creases (say, a unit square). 
From that sheet we do all-layers simple folds one at a time until we have reached our desired shape. 
We note that this is one of the simplest folding models, 
and that many variants have been considered in the literature (see \cite{ADK17} for  other folding models).

This situation leads us to the following natural \emph{segment folding problem}:
\begin{center}
\begin{boxedminipage}{0.98 \columnwidth}
\textsc{Segment Folding Problem} \\ [5pt]
\begin{tabular}{l p{0.80 \columnwidth}}
Input: & A set $S$ of line segments $s_1,s_2,\ldots,s_{|S|}$ in the plane (on a sheet of paper) and an integer $\kappa$.\\
Operation: & Reflection (all-layers simple fold) along a line containing one or more (parts of) segments $s_i$ for some $i\in \{1,\dots, |S|\}$. \\
Question: & Is there a sequence of $\kappa$ operations such that all (parts of) segments $s_i$ become reflection (crease) lines?

\end{tabular}
\end{boxedminipage}
\end{center} 
{We remark that only folds along the supporting lines of the line segments are possible. 
As we will see, this setting gives rise to interesting questions.} 
Notice that, 
when we fold along a line $\ell$, the location of all segments in one side of the line are reflected (via line symmetry) onto the other half-plane. 
In particular, if $\ell$ intersects the interior of some segment $s_j$ for $j\leq n$, it may create two segments that form a $V$ shape and meet at the folded line (it will create two segments if and only if $\ell$ and $s_j$ do not meet at a right angle and $s_j \not\subset \ell$). 
Whenever this happens, we have to fold the two subsegments since the original segment has been split into two.

\begin{figure}
	\centering
	\includegraphics{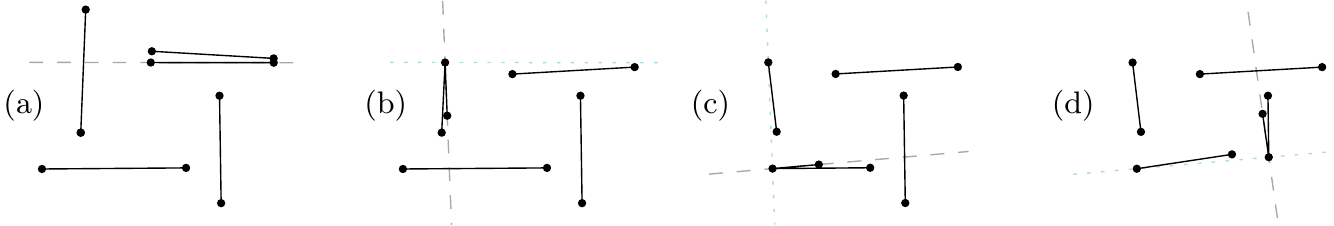}
	\caption{A problem instance that can be folded arbitrarily many times along non-folded input segments.
		If you fold along the dashed lines, the instance cycles through the four patterns and never ends. 
		Note that in all cases we have two segments forming a V shape and three almost but not exactly orthogonal segments. Each folding changes the length and the position of the V shape, but the overall structure remains the same. Note that it is possible to fold this instance with a finite number of moves (folding any of the segments that is not part of the V shape).}
	\label{fig:inf}
\end{figure}

Because folding along a line may increase the number of folds that are needed, we wonder if every instance can be folded with a finite number of fold operations\footnote{We thank Takeshi Tokuyama for posing this question.}. We say that an instance is {\em foldable} if it has a solution to the segment folding problem in a finite number of moves. 
To date, we do not know if every problem instance is foldable; in \figurename~\ref{fig:inf} we show an example admitting an infinite sequence of folds.  
Another natural question is related to efficiency: If an instance is foldable, can we find the sequence of folds? Ideally, one that minimizes that number of fold operations that are needed? 
In this paper we prove that, unless P=NP, we cannot find this optimal sequence in polynomial time.  
\begin{theorem}
	\label{thm:nphardness}
	Deciding whether a segment folding problem instance $S$ consisting of a set of $|S|$ segments
	can be solved with $|S|$ folds 
	is NP-hard.
\end{theorem}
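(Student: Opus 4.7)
The plan is to establish NP-hardness by a polynomial-time reduction from a known NP-hard problem. Since the setting is geometric and naturally two-dimensional, a reduction from \textsc{Planar 3-SAT} (or a similarly rigid variant such as \textsc{Positive Planar 1-in-3-SAT}) looks most promising, because variables, clauses, and wires can then be placed in the sheet without crossings and each gadget only needs to interact with its neighbors.

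The key structural observation to exploit is that the bound ``$|S|$ folds for $|S|$ segments'' is extremely tight: in a valid solution, almost every fold must crease exactly one previously uncreased segment along its entire length, without splitting any other uncreased segment. A fold along the line through $s_i$ splits another segment $s_j$ unless $s_j \subset \ell(s_i)$, $s_j$ meets $\ell(s_i)$ only at an endpoint, $s_j$ is perpendicular to $\ell(s_i)$, or $s_j$ lies entirely on one side of $\ell(s_i)$. This turns the problem into a combinatorial ordering problem: the folds must be sequenced so that at each step the current segment can be creased ``cleanly,'' and the reflection of the rest of the paper does not produce new obstructions. Savings are possible only when two segments become collinear after some reflection and are then creased by a single fold, which is the mechanism a clause gadget can be forced to use.

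Given this, I would design the reduction as follows. First, build a variable gadget consisting of a small bundle of segments admitting exactly two orderings of folds that respect the non-splitting condition; label these orderings \emph{true} and \emph{false}. Second, build wire gadgets, a linear chain of segments that propagate the fold ordering chosen at the variable and that also admit only two locally consistent orderings. Third, build a clause gadget attached to three incoming wires, designed so that the three incoming segments can be ``merged'' into collinearity by a single extra fold only if at least one incoming wire carries the satisfying truth value; in the unsatisfied case, a strict extra fold is required, breaking the $|S|$-fold budget. Finally, place the whole planar layout inside a bounding frame of perpendicular segments whose folds either never reach the interior or act rigidly on whole sub-gadgets, so that gadgets do not interfere across the sheet.

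The main obstacle, and where most of the work will go, is the gadget design itself, for two reasons. First, a fold line is unbounded, so what looks like a ``local'' operation globally reflects every segment on one side of it; the gadgets must therefore be laid out so that relevant reflections either miss other gadgets or move them rigidly as a block. Second, the correctness proof requires a careful case analysis: we must show that every fold sequence of length $|S|$ can be normalized to one that respects the gadget decomposition, so that a short fold sequence really does induce a truth assignment rather than some exotic global shortcut. Once the gadgets are in place, membership of the problem's complement is not needed (only NP-hardness is claimed), so there is no need to argue an NP upper bound; the reduction itself, together with the two implications \emph{(satisfiable $\Rightarrow$ foldable in $|S|$ steps)} and \emph{(foldable in $|S|$ steps $\Rightarrow$ satisfiable)}, suffices.
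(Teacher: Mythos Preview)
What you have written is a plan, not a proof: no gadget is actually constructed, no coordinates are given, and neither direction of the equivalence is argued. So as it stands there is a genuine gap.

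Beyond that, the specific mechanism you propose for clauses is essentially the \emph{opposite} of what the paper does, and it is the more fragile choice. You want to engineer the satisfied case so that two segments become collinear after a reflection and are then creased by a single fold, using that ``saving'' to meet the $|S|$ budget. The paper instead arranges the instance to be in \emph{general position}: no two segments ever share a supporting line, even after up to $|S|$ folds, so each fold removes exactly one segment and there are never any savings. Under this invariant the $|S|$ budget simply means that no fold may stab another segment or create a non-right-angle crossing; the clause gadget is then a fixed five-segment configuration that is \emph{impossible} to fold in five steps unless some additional vertical segment (a literal segment) has been reflected into a designated ``good zone'' that unlocks the first legal fold. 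This is much easier to control than engineered collinearities, because collinearity after several reflections is a measure-zero condition that is hard to force at one place while forbidding it everywhere else; the paper in fact spends a separate section (the perturbation argument) precisely to \emph{destroy} accidental collinearities and thereby pass from the restricted problem to the unrestricted one.

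A second difference: the paper does not use planarity or wires at all. It reduces from ordinary 3SAT, places all variable gadgets in one region, all clause gadgets in another, and has each variable gadget's folds reflect literal segments directly into the good or bad zone of the appropriate clause gadget. The absence of wires removes one of the two obstacles you anticipated (long fold lines interfering with distant wire chains); the remaining global-interaction issue is handled by spacing the gadgets and by a sequence of ordering lemmas showing that any length-$|S|$ solution must fold variables first (in index order), then literals, then clauses. If you want to continue with your approach you would need to supply comparable ordering lemmas for wires, which your sketch does not yet do.
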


\section{Preliminaries}
Let $S$ be a given set of $|S|$ segments in $\mathbb{R}^2$. 
Let $\ell_s$ be the supporting line of a segment $s \in S$ 
and let $R_{s}$ be the right open half-plane\footnote{For a horizontal supporting line $\ell_s$ we can define $R_{s}$ to be the top half-plane. Else, the right open half-plane is the one that contains all points $(\chi,0)$ for all values of $\chi$ larger than some threshold value. } bounded by $\ell_s$.
For a set $U\subseteq \mathbb{R}^2$ (in this paper $U$ is a collection of segments) and a segment $s\in S$, we denote as $\mathcal{R}_{s}(U)$ the reflection of set $U$ along the supporting line $\ell_s$ of segment $s$.
A fold along the supporting line $\ell_s$ of a segment $s\in S$ creates a new set of segments $S' := \closure(\{ (S \cap R_s) \cup \mathcal{R}(S \setminus \closure(R_s))\})$ 
where the segments (or parts of them) from $S$ in the left half-plane defined by $\ell_s$ are reflected into the right half-plane (or viceversa).
We denote that operation a \emph{fold along $s$}, or simply, \emph{folding~$s$}. Note that we only allow folding along segments of $S$.

We start by making straight-forward observations about basic folds.
Recall that we are interested in deciding whether an instance can be folded using at most $|S|$ folds. 
Folding along a segment $s$ on the convex hull does not change the relative positions of the remaining segments, and therefore the resulting instance after folding is the same or a reflected copy of the one resulting from just removing $s$. These two instances are equivalent for the problem we consider and therefore it is enough to just remove $s$:

\begin{obs}\label{obs:ch}
	If a segment $s$ lies on the boundary of the convex hull of our set of segments, a fold along it does not change any of the remaining segments.
\end{obs}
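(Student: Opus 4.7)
The plan rests on the characterization of convex-hull boundary segments via supporting lines. Since $s$ lies on the boundary of $\closure(\text{conv}(S))$, its supporting line $\ell_s$ is a supporting line of $\text{conv}(S)$, so every point of every segment of $S$ lies in a single closed half-plane bounded by $\ell_s$. In particular, every segment of $S \setminus \{s\}$ lies either entirely in $\closure(R_s)$ or entirely in the opposite closed half-plane (it cannot straddle $\ell_s$, since that would contradict $\ell_s$ being a supporting line).

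I would then split into two cases according to which side of $\ell_s$ contains the remaining segments. In the first case all of $S$ lies in $\closure(R_s)$, so $S \setminus \closure(R_s)$ consists at most of parts of segments lying on $\ell_s$ itself. The reflection part $\mathcal{R}(S \setminus \closure(R_s))$ of the fold operation then contributes nothing new (points on $\ell_s$ are fixed by the reflection), while $S \cap R_s$ retains the open portions of all remaining segments; the closure then puts back their intersections with $\ell_s$, yielding the remaining segments exactly as they were. In the second case all of $S\setminus \{s\}$ lies in the opposite closed half-plane, so $\mathcal{R}(S \setminus \closure(R_s))$ reflects the whole remaining configuration into $R_s$. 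Because this is a single rigid motion applied uniformly to every remaining segment, the relative configuration is preserved; the resulting instance is a mirror image of $S \setminus \{s\}$.

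Since the segment folding problem is invariant under isometries of the plane, the instance $S'$ obtained after the fold is in both cases equivalent to the instance $S \setminus \{s\}$ for all subsequent folding considerations, which is exactly what the observation asserts.

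The only technical points requiring care are degenerate configurations: endpoints of other segments that happen to lie on $\ell_s$, or entire segments collinear with $s$. Such points are pointwise fixed by the reflection and belong to both halves' closures, so the interplay between $S \cap R_s$, $\mathcal{R}(S \setminus \closure(R_s))$, and the outer closure operator handles them correctly. I do not expect a substantive obstacle here; the observation is essentially a direct consequence of the supporting-line property.
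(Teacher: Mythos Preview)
Your proposal is correct and follows essentially the same reasoning as the paper. The paper does not give a formal proof of this observation; it is justified by the single sentence preceding it, noting that a fold along a convex-hull segment leaves the remaining segments either unchanged or globally reflected, and that these two outcomes are equivalent for the problem. Your two-case analysis via the supporting-line property is exactly this argument written out in detail, with the added care about degenerate points on $\ell_s$ and the explicit appeal to isometry-invariance.
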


When a folding line supporting a segment $s\in S$ intersects the interior of another segment $s'\in S$  in a non-right angle, 
this second segment is split into two segments that share an endpoint. Both segments are now part of $S$ and they must therefore become reflection lines.
Let~$ S $ be a set of segments in $ \mathbb{R}^2 $, we say a segment $ s \in S $ \emph{stabs} $ t \in S $, $ s \neq t $, if the supporting line $ l_s $ intersects $ t $, but $ s $ and $ t $ do not share a common point.
We say a segment $ s \in S $ is stabbing, if there exists a $ t \in S $ such that~$ s $ stabs~$ t $.  

\begin{obs}\label{obs:non_stab}
	Let $S$ be a set of segments 
	in $\mathbb{R}^2$ 
	and $s \in S$ be a non-stabbing segment, 
	then if a fold along $s$ does not produce a crossing between another two segments, no segment is split. 
\end{obs}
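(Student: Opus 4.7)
The plan is to prove the contrapositive: suppose $s$ is non-stabbing and that some segment $t \in S$ is nevertheless split by the fold along $s$; then I would exhibit a crossing between two segments of the post-fold configuration. First I would unpack what it means for $t$ to be \emph{split} in the sense of the preceding paragraph: the supporting line $\ell_s$ must meet the interior of $t$ at a single point $p$ and the angle there must not be a right angle. The excluded cases are easily dispatched (a perpendicular meeting maps $t$ onto itself; if $\ell_s$ met $t$ only at an endpoint or if $t \subset \ell_s$ then no splitting could occur), so I may assume $p$ lies in the relative interior of $t$.

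Next I would use the non-stabbing hypothesis to pin $p$ on $s$. Since $\ell_s \cap t$ contains $p$ and $s$ does not stab $t$, the definition of stabbing forces $s \cap t \neq \emptyset$. Because $t \not\subset \ell_s$, the intersection $\ell_s \cap t$ equals $\{p\}$, so $s \cap t = \{p\}$, and in particular $p \in s$. Thus $p$ is simultaneously in the interior of $t$ and on $s$, either at an endpoint of $s$ or in its relative interior.

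Finally I would exhibit the crossing. The fold leaves the portion of $t$ in $\closure(R_s)$ unchanged and replaces the portion in the opposite half-plane by its image under $\mathcal{R}_s$, producing two sub-segments of $S'$ that share the endpoint $p$ and form a V-shape with strictly positive opening angle (recall that the original intersection at $p$ was non-right, so the two arms are not collinear). Both V-arms lie in $S' \setminus \{s\}$. When $p$ lies in the relative interior of $s$, each arm has $p$ as an endpoint while $s$ has $p$ in its interior, so every arm forms a T-crossing with $s$, which is the crossing demanded by the contrapositive conclusion.

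The main obstacle I foresee is the boundary case in which $p$ coincides with an endpoint of $s$: there the V and $s$ meet only at a common endpoint, so no crossing with $s$ arises directly from the configuration at $p$. Handling this case will require either an implicit general-position assumption on the input, or an auxiliary argument that exploits a third segment of $S$ whose incidence with $\ell_s$ (or with $t$) is altered by the reflection, thereby producing the needed crossing between two segments other than $s$. Pinning down the exact definition of "crossing" intended by the authors and dispatching this edge case is where the delicate portion of the proof will live.
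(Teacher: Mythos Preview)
The paper does not prove this observation; it is stated without argument and used only in the context of the reduction. Your contrapositive plan is natural, but the main case does not go through as written. After a fold along $s$ the segment $s$ itself is consumed (it lies on $\ell_s$ and contributes nothing to $S'=\closure((S\cap R_s)\cup\mathcal{R}_s(S\setminus\closure(R_s)))$), so exhibiting a T-crossing between a V-arm and $s$ does not yield a crossing ``between another two segments'' of the post-fold instance. Concretely, take $S=\{s,t\}$ with $s$ and $t$ meeting at a single non-right-angle point $p$ in the interior of $t$; then $s$ is non-stabbing, the fold splits $t$ into two arms sharing only the endpoint $p$, and $S'$ consists of precisely these two arms---no crossing is produced. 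The endpoint case you flag exhibits the same phenomenon when $p$ is an endpoint of $s$, so it is not merely a boundary nuisance but the same gap.

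What makes the observation valid in the paper is an implicit hypothesis satisfied everywhere in the reduction: the segments of $S$ are pairwise disjoint. Under that assumption the argument is one line and does not use the ``no crossing produced'' clause at all: if $s$ is non-stabbing and $\ell_s\cap t\neq\emptyset$ for some $t\neq s$, then by the definition of stabbing we would need $s\cap t\neq\emptyset$, contradicting disjointness; hence $\ell_s$ misses every other segment and nothing can be split. Your write-up should invoke this disjointness (or an equivalent general-position hypothesis) rather than trying to manufacture a crossing from the V-arms.
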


Consider an instance in {\em general position}, that is, 
no two segments of $S$ lie on the same line  
even after we have performed up to $|S|$ folds. 
In that case each of the first $|S|$ folds can only decrease the size of $S$ by one. 

\begin{obs}\label{obs:stab}
	Given a segment folding problem instance $S$ in general position, 
	a solution sequence of length $|S|$ 
	cannot make a fold along a stabbing segment  nor a fold that makes two segments cross in a non-right angle.
\end{obs}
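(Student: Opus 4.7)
The plan is a telescoping count of how the number of still-uncreased segments evolves. Let $S_0 := S$ and let $S_i$ denote the set of uncreased segments after the $i$-th fold; since the solution has exactly $|S|$ folds, we must have $|S_{|S|}| = 0$. When the $i$-th fold is performed along some $s \in S_{i-1}$, the segment $s$ becomes a crease and is removed from the set, and by the general position assumption no other segment of $S_{i-1}$ is collinear with $s$, so no other segment is simultaneously creased by this fold. On the other hand, if $\ell_s$ crosses the interior of some other $t \in S_{i-1}$ at a non-right angle, then $t$ is split into two segments meeting in a $V$, replacing one element of the set by two. Writing $k_i$ for the total number of such splits caused by fold $i$, we obtain $|S_i| = |S_{i-1}| - 1 + k_i$.

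Telescoping over $i = 1, \dots, |S|$ gives $-|S| = -|S| + \sum_{i=1}^{|S|} k_i$, so $\sum k_i = 0$, and since each $k_i \ge 0$, every fold must satisfy $k_i = 0$. In other words, no fold in the solution ever splits a segment. The first half of the claim then follows immediately: if the $i$-th fold were along a stabbing segment $s$, then $\ell_s$ would pass through the interior of some other segment, and in general position this intersection occurs at a non-right angle, which would force a split and contradict $k_i = 0$.

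For the second half, I would argue by contradiction. Suppose some fold produces two uncreased segments $u, v$ crossing at a non-right angle at a point $p$. Both must eventually be creased. A subsequent fold along some $w \notin \{u, v\}$ whose line passes through $p$ would cut the interiors of both $u$ and $v$, which is forbidden since $k = 0$; and a fold whose line avoids $p$ must keep $u$ and $v$ entirely on the same side of $\ell_w$ (otherwise $\ell_w$ would cut the interior of one of them), so $u$ and $v$ are moved rigidly together and continue to cross at the same non-right angle. Hence the crossing persists until the first later step at which $u$ or $v$ is the folded segment, at which point the fold line slices through the other's interior and produces a split, again contradicting $k_i = 0$.

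The main obstacle will be the case analysis in the second half: one must verify carefully that no sequence of intermediate reflections can separate the two crossing segments without itself creating a forbidden split, which is why the counting identity has to be invoked a second time to rule out the resolution-via-splitting scenarios. Once this is in place, both conclusions of the observation fall out of the single telescoping identity obtained in the first paragraph.
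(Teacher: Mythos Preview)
Your argument is correct and rests on the same counting idea as the paper: in general position a fold removes exactly one segment, so a length-$|S|$ solution can never afford a split. The paper's own justification is much terser---it simply asserts that (i) a fold along a stabbing segment does not decrease the segment count and (ii) two segments crossing at a non-right angle cannot be disposed of in fewer than three folds, then appeals to the budget of $|S|$ folds. Your telescoping identity $\sum_i k_i = 0$ makes (i) explicit, and your persistence argument in the third paragraph is effectively a proof of (ii), which the paper leaves as a bare claim. So the route is the same, but you have unpacked the second half where the paper does not.

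One small point worth tightening: when you write ``in general position this intersection occurs at a non-right angle'' you are using a slightly stronger notion of general position than the paper formally states (the paper only rules out collinearities after up to $|S|$ folds, not perpendicularities). The paper's own justification tacitly makes the same assumption, so this is not a defect of your argument relative to theirs, but if you want the stabbing half to be airtight you should note that a right-angle stab would not cause a split and hence is not excluded by the counting alone.
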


This claim follows from $(i)$ the fact that, in general position, folding along a stabbing segment 
does not decrease the number of segments;  
$(ii)$ two segments crossing in a non-right angle cannot be folded in less than $3$ folds, and $(iii)$ we are interested in determining if a problem instance $S$ can be solved with  at most $|S|$ folds.

\section{Reduction}\label{sec:hardness}

The reduction we first describe in this section uses for simplicity two perpendicular directions for the segments, and thus, 
constructs a set of segments~$S$ that is not in general position. 
 In the folding sequence we disallow folding along stabbing segments and along segments that produce crossings.
More precisely, the problem we first show hardness for is the \emph{restricted segment folding problem}:
\begin{center}
\begin{boxedminipage}{0.98 \columnwidth}
\textsc{Restricted Segment Folding Problem} \\ [5pt]
\begin{tabular}{l p{0.80 \columnwidth}}
Input: & A set $S$ of line segments $s_1,s_2,\ldots,s_{|S|}$ in the plane and an integer $\kappa$.\\
Operation: & Fold along a line $\ell$ containing one or more segments $s_i$ for some $i\in \{1,\dots, |S|\}$. 
The line $\ell$ must not intersect the interior of any other segment in $S$ and folding along it must not produce crossings between segments in $S$.\\
Question: & Is there a sequence of $\kappa$ operations such that all (parts of) segments $s_i$ become folding lines?
\end{tabular}
\end{boxedminipage}
\end{center}
We show the following result: 
\begin{theorem}\label{thm:restricted}
    Deciding whether a restricted segment folding problem instance $S$ 
    consisting of a set of $|S|$ segments can be solved with $|S|$ folds is strongly NP-hard, 
    even if the segments in $S$ have only three directions. 
\end{theorem}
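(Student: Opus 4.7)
The plan is to reduce from a planar variant of 3-SAT (for instance Planar 3-SAT), which is strongly NP-hard. The restricted rules, together with Observations~\ref{obs:ch}--\ref{obs:stab}, give a clean combinatorial handle: any valid sequence of $|S|$ folds must consume the segments one by one, and at each step the chosen segment is required to be non-stabbing and to introduce no new crossings after its reflection acts on the remaining segments. So a solution is essentially a linear order on the segments subject to local feasibility conditions that depend on the current reflected layout. The reduction will engineer these feasibility conditions so that the admissible orders are exactly the ones encoding satisfying assignments.

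Concretely, I would build three kinds of gadgets out of short segments in the three allowed directions (say horizontal, vertical, and one fixed diagonal), arranged in a coarse grid. A \emph{variable gadget} contains a pair of symmetric ``seed'' segments, of which exactly one can legally be folded first under the restricted rules; folding the $\textsc{true}$ seed reflects a half-plane so that a specific chain of follow-up segments becomes non-stabbing and non-crossing in the correct order, while the $\textsc{false}$ seed does the same for a symmetric chain. A \emph{wire gadget} is a staircase of axis-parallel segments glued together by short diagonal pieces with the property that, once a seed-choice arrives on one side, the staircase can only be folded in a domino-like order that carries the value to the other side; negation and bends are achieved by inserting one extra diagonal piece that swaps which chain is activated. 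A \emph{clause gadget} receives three wires and is designed so that if all three arrive in the ``false'' state, some segment inside the clause is left stabbed or would create a crossing when folded, so that the total number of folds necessarily exceeds $|S|$; if at least one arrives ``true'', a local order through the clause exists that finishes it within its segment count.

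The main obstacle, and where the technical work goes, is that a single fold reflects an entire half-plane, so far-apart gadgets interact through each fold. I would control this by inscribing the whole construction in a large axis-parallel frame whose sides lie on the convex hull, so by Observation~\ref{obs:ch} its segments can be disposed of without perturbing the interior, and by aligning each gadget to the grid so that every reflection used by the intended folding sends gadgets into empty cells of the frame; the only non-trivial interactions then occur locally, at the junction between a wire and the gadget it enters. The crux is to verify, gadget by gadget, that these local interactions realize exactly the intended ``true enables, false blocks'' behavior, with no unintended order (corresponding to a non-satisfying assignment) completing in $|S|$ folds and no intended order blocked by an unnoticed stabbing or crossing. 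Once this is established, the reduction is polynomial (constant-size variable and clause gadgets, wire length polynomial in the planar embedding), the segments use only three directions by construction, and strong NP-hardness is inherited from Planar 3-SAT.
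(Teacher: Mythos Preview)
Your outline has a genuine gap, and it is exactly at the point you flag as ``the main obstacle.'' In the restricted segment folding model every fold reflects an entire half-plane, so a wire-based reduction in the style of Planar 3-SAT is fighting the problem rather than using it. Concretely, a staircase wire with $k$ segments requires $k$ successive folds, each of which reflects \emph{all} other gadgets (the other variables, the other wires, the clauses) across a different line. Your suggestion that ``every reflection used by the intended folding sends gadgets into empty cells of the frame'' is not substantiated: after $\Theta(k)$ reflections the image of a far-away gadget is the composition of $\Theta(k)$ reflections, and you give no mechanism (and no polynomial bound on coordinates) that keeps these images from colliding with each other or with still-unfolded wire pieces. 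Invoking a convex-hull frame via Observation~\ref{obs:ch} does not help, since the frame segments are themselves reflected by interior folds and cease to lie on the convex hull after the first wire fold. Without a concrete scheme controlling the cumulative effect of all wire folds on all other gadgets, the argument does not go through.

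The paper's proof takes the opposite tack: it reduces from ordinary 3SAT (no planarity needed) and \emph{exploits} the global nature of a fold instead of suppressing it. There are no wires at all. Each variable gadget has exactly two legal first folds ($t$ or $f$), and that first fold directly reflects a small ``literal segment'' placed far to the left all the way into the good or bad zone of the appropriate clause gadget on the right (Lemma~\ref{lem:positive_literal}). The delicate part is a reset computation (Lemma~\ref{lem:variable_undo}) showing that after the thirteen forced folds of one variable gadget, the next variable gadget sits at exactly the same horizontal offset from the clauses as the previous one did, regardless of the true/false choice; this is what keeps coordinates polynomial and makes the reduction strong. Ordering is enforced by blocker segments so that variables are folded first and in index order, then literals and clauses in reverse clause order (Lemmas~\ref{lem:variable_order}, \ref{lem:internal_literal_order}, \ref{lem:clause_order}). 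If you want to repair your approach, the key missing idea is to abandon wires and let the fold operation itself transport information from variables to clauses.
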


In Section~\ref{sec:gp} we show how to modify the construction in the reduction in the proof of Theorem~\ref{thm:restricted} to show NP-hardness for the segment folding problem (proving Theorem~\ref{thm:nphardness}). 

We will prove Theorems~\ref{thm:nphardness}  and~\ref{thm:restricted} by reducing from 3SAT. 
A 3SAT formula is given by a set $ \{x_1,\dots ,x_n\} $ of boolean variables and $ \{c_1,\dots,c_m\} $ of clauses. 
Each clause contains the disjunction of three literals. 
A literal is a positive or negative occurrence of a variable. 
A formula $F(x_1,\dots, x_n)$ is the conjunction of the clauses $ c_1, \dots, c_m $. 
In 3SAT we say $F$ is satisfied if and only if for each clause $ c_j $ at least one of its literals evaluates to true. 
For the reduction we will construct a set $S$ of segments from a given 3SAT formula $F$.
To make the final correctness proof simpler, we consider only 3SAT formulas
in which no clause has only positive or negative literals. 
Transforming a given 3SAT formula in such a way is easily possible by the following folklore lemma.
For reference, a proof can be found in~\cite{DBLP:conf/wg/ArroyoKPSVW20}.
\begin{lemma}[Lemma~1 in \cite{DBLP:conf/wg/ArroyoKPSVW20}]\label{lem:transform}
	The following transformation of a clause with only positive or only negative literals, respectively, preserves the satisfiability of the clause
	($y$ is a new variable and $\mathtt{false}$ is the constant value false): 
	\begin{align*}
		x_i \! \lor \! x_j \! \lor \! x_k \Rightarrow \!
		\begin{cases}
			x_k \! \lor \!  y \lor \mathtt{false} & \! \text{(i)}\\
			x_i \! \lor \!   x_j \! \lor \! \neg y & \! \text{(ii)}
		\end{cases}
		& & \hspace{-2.75mm}
		\neg x_i \! \lor \! \neg x_j \! \lor \! \neg x_k \Rightarrow \!
		\begin{cases}
			\neg x_i  \! \lor  \! \neg x_j  \! \lor  \! y & \! \text{(iii)}\\
			\neg x_k  \! \lor  \! \neg y  \! \lor  \! \mathtt{false} & \! \text{(iv)}
		\end{cases}
	\end{align*}
\end{lemma}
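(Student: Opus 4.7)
The plan is to prove the lemma by case analysis on the truth values of the literals and of the auxiliary variable $y$, handling the positive and negative cases separately and within each case establishing both directions of the satisfiability equivalence. The substitution of constants is straightforward: a disjunct $\mathtt{false}$ can simply be dropped from its clause without changing its truth value, so the real content is checking that the two-clause gadget is satisfiable (as an extension of the original assignment) if and only if the single original clause is.

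For the positive clause $x_i \lor x_j \lor x_k$ replaced by $(x_k \lor y)$ and $(x_i \lor x_j \lor \neg y)$, I would proceed as follows. For the forward direction, assume the original clause is satisfied by some assignment; split on which literal satisfies it. If $x_k$ is $\mathtt{true}$, set $y := \mathtt{false}$: then the first new clause is satisfied by $x_k$, and the second by $\neg y$. Otherwise at least one of $x_i, x_j$ is $\mathtt{true}$; set $y := \mathtt{true}$ so that the first clause is satisfied by $y$, and the second by the true literal among $x_i, x_j$. For the reverse direction, assume both new clauses are satisfied by some assignment and split on $y$: if $y = \mathtt{true}$ then the second clause forces $x_i$ or $x_j$ true; if $y = \mathtt{false}$ then the first clause forces $x_k$ true. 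In each subcase the original clause is satisfied.

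The negative case $\neg x_i \lor \neg x_j \lor \neg x_k$ replaced by $(\neg x_i \lor \neg x_j \lor y)$ and $(\neg x_k \lor \neg y)$ is dual. For the forward direction, if $\neg x_k$ is true, set $y := \mathtt{true}$ so the first new clause is satisfied by $y$ and the second by $\neg x_k$; if instead $\neg x_i$ or $\neg x_j$ is true, set $y := \mathtt{false}$ so the first clause is satisfied by the corresponding negated literal and the second by $\neg y$. For the reverse direction, split again on $y$: if $y = \mathtt{true}$ the second clause forces $\neg x_k$; if $y = \mathtt{false}$ the first clause forces $\neg x_i$ or $\neg x_j$. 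Either way the original clause is satisfied.

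There is no real obstacle here; the only thing that requires care is making sure the case split on the satisfying literal is exhaustive in the forward directions and that the newly introduced variable $y$ is genuinely fresh, so that choosing its value independently per transformed clause does not interfere with the rest of the formula. Since each application of the rule introduces its own new variable $y$ and the lemma only claims satisfiability equivalence (not logical equivalence of the formulas), this bookkeeping is immediate and the proof reduces to the eight short subcases described above.
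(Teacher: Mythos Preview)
Your proof is correct; the case analysis is complete and each subcase is verified properly. The paper itself does not give a proof of this lemma at all---it merely cites the external reference \cite{DBLP:conf/wg/ArroyoKPSVW20}---so your argument is more than adequate and is essentially the canonical way to establish such a clause-splitting equivalence.
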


Additionally to modifying the formula in the way described by Lemma~\ref{lem:transform}
we also ensure that the last clause in the formula is one containing only positive literals, i.e.,
a clause with one or two literals both positive.
If no such clause exists, we simply create a clause of size one and a new variable that is only present in this clause.
In all following sections we assume that the given 3SAT formula has been transformed as in Lemma~\ref{lem:transform} and
that the last clause has only positive literals.

To better illustrate the rather complex construction we developed a tool to procedually generate a configuration from a given 3SAT formula.
The code can be found online\footnote{\url{https://gitlab.com/fklute/segment-folding}} 
as well as a video showing some simple examples\footnote{\url{https://drive.google.com/file/d/1ythWVL1Wav-uODUFAdKG4xEXOWxDK9zx/view?usp=sharing}}.

\subsection{Overview}

\begin{figure*}
	\centering
	\includegraphics{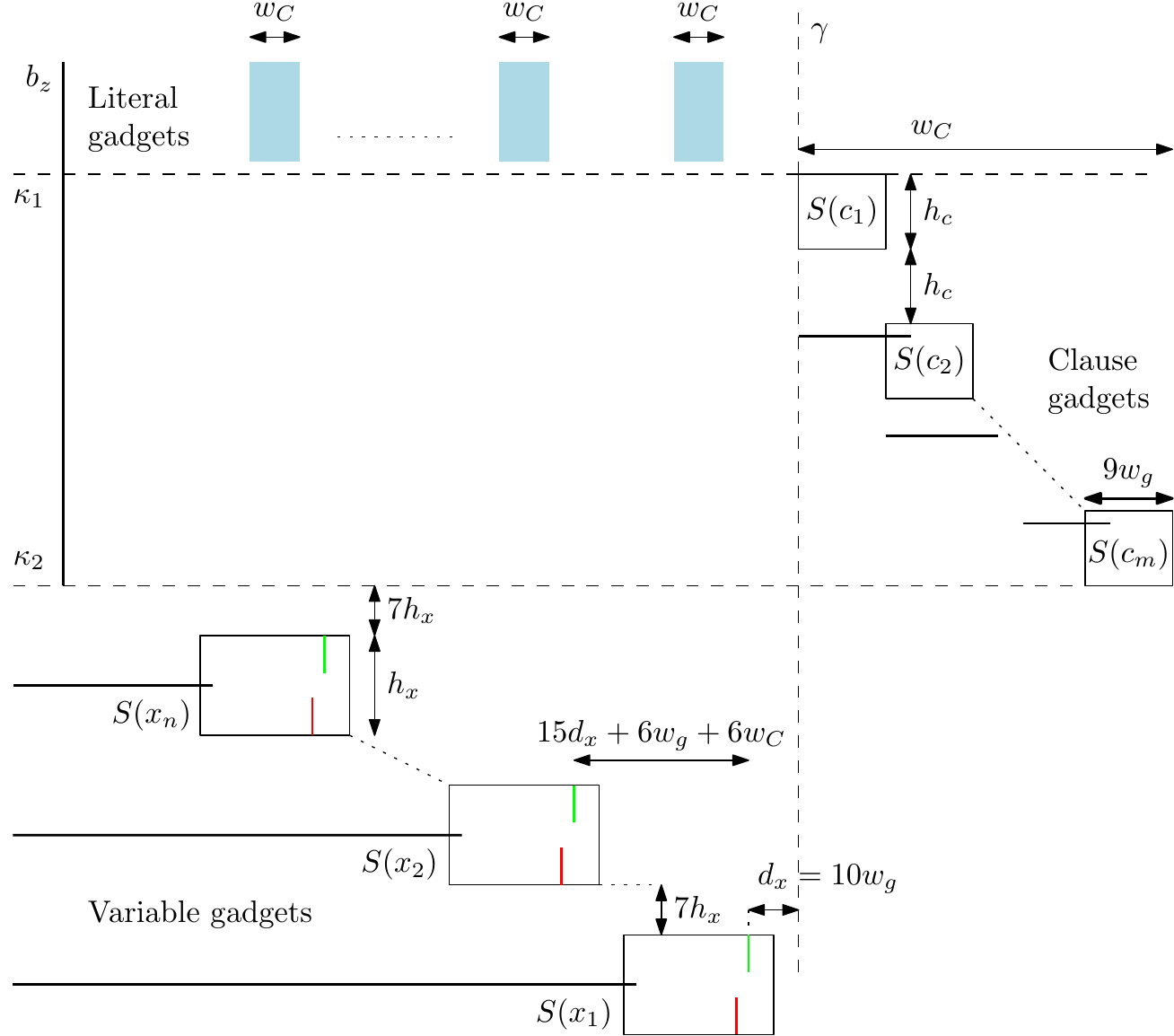}
	\caption{Overview of the reduction. Note that the lengths are not up to scale.}
	\label{fig:overview}
\end{figure*}

We first give a brief overview of the reduction. Let $F(x_1,\dots,x_n)$ be a 3SAT formula. 
As usual, our reduction will construct variable and clause gadgets using  $F(x_1,\dots,x_n)$ in a way so that any solution must first fold all variable gadgets. In each such gadget we will have the choice to fold one of two segments as the first one in the sequence, which will in turn encode the truth value of a variable. The folds themselves will shift the position of some literal segments. These segments will enable the folding of a clause gadget if and only if the truth assignment of the associated variable satisfies the clause. 
Only after all variables are folded the clause gadgets can be folded. 
This will be possible in the required number of steps if and only if for each clause gadget at least one variable was folded in a way that the corresponding literal segment was placed in the enabling part of the clause gadget.

\subsection{Global layout}
We define a vertical line $\gamma$ and two horizontal lines $\kappa_1$ and $\kappa_2$. 
We call the region between $\kappa_1$ and $\kappa_2$ and to the right of $\gamma$ the \emph{clause region}. 
The region above $\kappa_1$ and to the left of $\gamma$ is the \emph{literal region} 
and 
the region below~$\kappa_2$ and to the left of $\gamma$ is the \emph{variable region}.
As the names indicate, we will place the clause, literal, and variable gadgets in the corresponding regions. 
Figure~\ref{fig:overview} presents an illustration, 
the details will become clear in the following sections. 
Furthermore, we define $w_C$ as the sum of the width of all clause gadgets and $m_C = w_C/2$.

\subsection{Variable gadget}
For each variable $ x $ the corresponding variable gadget consists of thirteen segments. 
We call these segments \emph{true segment} $ t $, 
\emph{false segment} $ f $, 
\emph{true helper} $ t_h $, 
\emph{false helper} $ f_h $, 
\emph{true blocker one} $ t_b^1 $, 
\emph{true blocker two} $ t_b^2 $,
\emph{true blocker three} $ t_b^3 $,
\emph{false blocker one} $ f_b^1 $, 
\emph{false blocker two} $f_b^2 $, 
\emph{false blocker three} $f_b^3 $, 
\emph{next blocker one} $ b_1 $,
\emph{next blocker two} $ b_2 $,
and \emph{next blocker three} $ b_3 $. 
For a variable~$x$ we denote with $S(x)$ the set of the thirteen segments corresponding to this variable. 
The precise positioning of them is shown in Figure~\ref{fig:variable_gadget}. 

Let $w_g$ be the horizontal distance between the $t_h$ and $f_h$ segment. 
For a variable $x$, the gadget $S(x)$ without the next blocker segment $b_2$ has width $w_x = 37/2w_g + w_C$ and constant height $h_x$. 
Nearly all horizontal space is between the two helper segments and the true and false segment. 
It includes the \emph{literal strip} of variable gadget $S(x_i)$, 
see the light-blue region in Figure~\ref{fig:variable_gadget}. 
This strip has width $w_C$. 

\begin{figure*}[bt]
	\centering
	\includegraphics{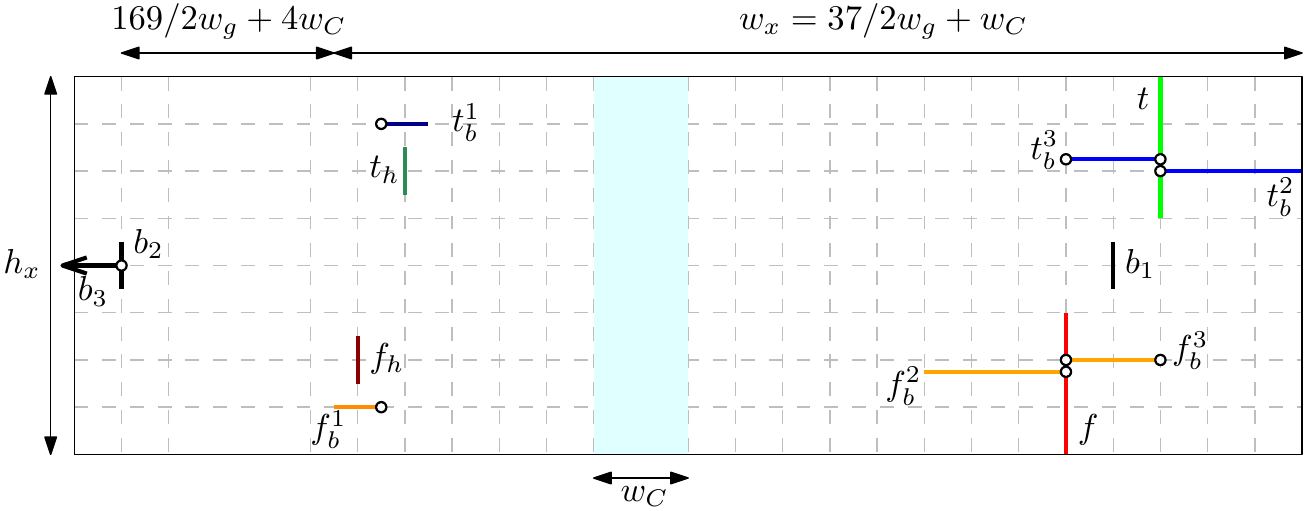}
	\caption{A variable gadget. 
		The literal strip is shown in lightblue. 
		The marked endpoints indicate that the corresponding segments are strictly separable with a vertical line. Especially, they do not intersect, but are separated by a
		small gap. Note that one cell of the lightgray grid has width and height $w_g$.   
		(The drawing is not up to scale.)}
	\label{fig:variable_gadget}
\end{figure*}

The key property of the placement of the segments is that if we start folding $t$, 
the segment $t_b^2$ is stabbed by the segment $f$, 
forcing us to fold ($t_b^1$ and) $t_h$ before $f$. 
Symmetrically, folding first along $f$ 
forces to fold ($f_b^1$ and)~$f_h$ before $t$.

It remains to explain the function of the three next blocker segments $ b_1 $,~$ b_2 $, and $ b_3 $ of a variable gadget $S(x_i)$ and the placement of $b_2$. 
These segments guarantee that no segment in the variable gadget $ S(x_{i+1}) $ can be folded before all the segments in $S(x_i)$ have been folded along. 
The segment $b_3 \in S(x_i)$ is a long horizontal segment,  
and above it lie
the segments in~$S(x_{i+1})\setminus \{b_3 \in~S(x_{i+1})\}$. 
As we will see, 
for each variable gadget $b_3$ is the last segment to be folded, 
and it prevents segments in $S(x_{i+1})$ from being folded before. 
The segment~$b_2$ is there to reset the following variable gadgets properly, i.e., it makes sure the next variable is at the same distance of $\gamma$ as the ones before. 

We now describe how the variable gadgets are placed relative to each other. 
We begin with variable $x_1$ and place it such that the true segment $t \in S(x_1)$ is $d_x = 10w_g$ units to the left of the vertical line $\gamma$ and the top-point of $t \in S(x_1)$ is $(n-1) 8h_x + 7h_x$ units below the horizontal line $\kappa_2$. 
The next variable gadgets $S(x_2),\dots , S(x_n)$ are placed to the left and above $S(x_1)$ always leaving $15d_x + 6w_g + 6w_C$ units of horizontal space between $t$ in $S(x_{i-1})$ and the $t$ segment of $S(x_i)$,  
as well as $2h_x$ units vertical space between the upper end-point of $t \in S(x_{i-1})$ and the lower end-point of $f \in S(x_i)$. 
(Hence, for each variable $x_i$ the true segment of its gadget $S(x_i)$ is placed
$10w_g + (i-1)(15d_x + 6w_g + 6w_C)$ units left of $\gamma$.)
We place $b_2 \in S(x_i)$ at $10d_x + 5w_C$ horizontally to the left of $t \in S(x_{i})$. 
Finally,~$b_3$ is placed directly to the left of $b_2$ and extended by $10w_C$ units to the left.

We now show that there are only four ways to fold a variable gadget.

\begin{lemma}\label{lem:variable_two_ways}
	Up to folding $t_b^3$ and $f_b^3$ there are exactly two ways in which a variable gadget can be folded in thirteen steps, namely:
	\begin{align*}
	& t \rightarrow t_b^1 \rightarrow t_h \rightarrow t_b^2 \rightarrow f \rightarrow f_b^1 \rightarrow f_h \rightarrow f_b^2 \rightarrow 
	b_1 \rightarrow b_2 \rightarrow b_3 \\ 
	\text{ or } & \\
	& f \rightarrow f_b^1 \rightarrow f_h \rightarrow f_b^2 \rightarrow t \rightarrow t_b^1 \rightarrow t_h \rightarrow t_b^3 \rightarrow 
	b_1 \rightarrow b_2 \rightarrow b_3.
	\end{align*}
	Furthermore, $t_b^3$ and $f_b^3$ can be folded only after $t_h$, respectively $f_h$, is folded and have to be folded before $b_1$ can be folded.
\end{lemma}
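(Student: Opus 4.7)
Our plan is to apply Observation~\ref{obs:stab} inductively. In any $|S|$-fold solution no fold may be along a stabbing segment nor produce a non-right-angle crossing, so at each intermediate state we look for segments that are simultaneously non-stabbing and whose fold introduces no new crossing; we will show that at almost every state only one such segment exists.

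The first step is to determine the legal opening moves. A case analysis over the thirteen segments of $S(x)$, based on the relative positions and the small gaps at the marked endpoints of Figure~\ref{fig:variable_gadget}, shows that every segment other than $t$ and $f$ is either stabbing in the initial configuration or its fold would introduce a crossing: the supporting line of each helper $t_h, f_h$ passes through one of the blockers; each $t_b^i$ (resp.\ $f_b^i$) stabs its neighbouring helper or true/false segment; and $b_1, b_2, b_3$ are stabbed by segments further inside the gadget. Hence the first fold is $t$ or $f$, and by symmetry we may assume it is $t$.

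Next we trace the forced suffix. The key blocking has already been isolated in the gadget description: after folding $t$, the reflected copy of $f$ still passes through $t_b^2$, so $f$ cannot be folded next. I would then verify, state by state, that in the current configuration exactly one segment is free from stabbing and crossing conflicts, namely the next segment in the sequence $t_b^1 \to t_h \to t_b^2 \to f$. The same argument applied to the mirrored side forces $f_b^1 \to f_h \to f_b^2$; and finally each of $b_1, b_2, b_3$ has its supporting line stabbing the later ones, which pins down their order. The ``up to $t_b^3, f_b^3$'' flexibility comes from checking that $t_b^3$ (resp.\ $f_b^3$) only interacts with $t_h$ (resp.\ $f_h$), which must be folded before it, and with $b_1$, which stabs it; between those two events it is freely foldable.

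The main obstacle is the bookkeeping. After each fold several segments have moved by reflection, and at roughly ten intermediate states one must re-examine which segments are stabbing and which folds would create crossings. Nothing is conceptually subtle, but the precise coordinates in Figure~\ref{fig:variable_gadget} (in particular the width $w_g$, the width $w_C$ of the literal strip, and the small gaps that make certain pairs strictly separable by vertical lines) are essential for each local check to come out uniquely as claimed, and translating these geometric facts into a clean case analysis is the bulk of the proof.
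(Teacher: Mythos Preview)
Your approach is the same as the paper's: at each intermediate state determine the unique non-stabbing, non-crossing-producing segment and invoke Observation~\ref{obs:stab}. Two small slips are worth fixing. First, in your opening-move analysis you write that ``$b_1, b_2, b_3$ are \emph{stabbed} by segments further inside the gadget''; being stabbed is irrelevant, and in fact the paper notes that $b_2$ is non-stabbing initially---it is excluded because folding it makes $b_1$ cross $b_3$. Second, your argument for the order $b_1\to b_2\to b_3$ (``each has its supporting line stabbing the later ones'') is not the right mechanism: the paper again uses the crossing criterion for $b_2$, not a stabbing relation among the $b_i$. Once you correct these two local justifications, your outline matches the paper's proof.
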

\begin{proof}
	Let $S = \{t,t_h,t_b^1,t_b^2, t_b^3,f,f_h,f_b^1,f_b^2, f_b^3,b_1, b_2, b_3\}$ be the segments of a variable gadget constructed as above. 
	Then the only three non stabbing segments are $b_2$, $t$ and $f$. 
	Notice that $b_2$ cannot be folded before $b_1$ and $b_3$, since $b_1$ would intersect $b_3$ after the folding. 
	By Observation~\ref{obs:stab} this means the only possible folds are along $t$ and $f$, respectively.
	
	Without loss of generality assume we first fold $t$. 
	Let $S'$ be the new set of segments. 
	Note that, since $t$ was not stabbing
	and no crossings were created,
	by Observation~\ref{obs:non_stab}, 
	we also did not create a new segment
	and no two segments are sharing a common supporting line. 
	Identify the segments in $S$ with their reflected counterparts in $S'$. 
	Then $f$ and $b_1$ are now stabbing $t_b^2$, but $t_b^1$ does not stab any segment anymore. 
	All other segments are still stabbing the same segments as before. 
	Hence the only possible fold is along the supporting line of~$t_b^1$. 
	By the same argumentation we get that the following folds must be in order $t_h$, $t_b^2$, $f$, $f_b^1$, $f_h$, $f_b^2$, $b_1$, and $b_3$. 
	The segment $b_2$ cannot be folded early as this would lead to $b_1$ forming a crossing with $b_3$,
	which is not allowed by Observation~\ref{obs:non_stab}.
	Finally, $t_b^3$ can never be folded before $t_h$ has been folded by Observation~\ref{obs:stab} and
	folding it at any point after that has all remaining segments of the variable gadget in one half-plane relative to the supporting line of $t_b^3$.
	Hence, the relative positions of the remaining segments in the variable gadget do not change.
	The same can be argued for $f_b^3$ relative to~$f_h$.
	We can argue in the same manner for the folding sequence starting with~$f$.
\end{proof}

Next we will show that the folding sequences of length $13n$ for the variable gadgets all fold the gadgets in order of the corresponding indices.

\begin{lemma}
	\label{lem:variable_order}
	Let $x_1,\dots,x_n$ be the variables of a 3SAT formula and consider the variable gadgets $S(x_1), \dots, S(x_n)$ as above, then the variable gadgets can be folded in $13n$ steps if and only if they are folded in order of their indices.
\end{lemma}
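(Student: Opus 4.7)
The plan is to prove the two directions separately. The forward direction is immediate from Lemma~\ref{lem:variable_two_ways}: folding $S(x_1), S(x_2), \ldots, S(x_n)$ in index order allows us to apply one of the two length-$13$ sequences to each gadget in turn, giving $13n$ folds total. The non-trivial direction is the converse, and I would prove it by induction on $i$, establishing that in any $13n$-step solution the first $13i$ folds are exactly the segments of $S(x_1), \ldots, S(x_i)$ executed in this order, each gadget following one of the two orderings of Lemma~\ref{lem:variable_two_ways}.

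The inductive step splits into two sub-claims that together capture the purpose of the ``next blocker'' segment $b_3$. First, assuming $S(x_1), \ldots, S(x_{i-1})$ are already fully folded, I would show that no segment of $S(x_j)$ for $j \geq i+1$ admits a legal fold while $b_3 \in S(x_i)$ is present. The geometric reason is that $b_3 \in S(x_i)$ is a long horizontal segment placed directly below every segment of $S(x_{i+1}), \ldots, S(x_n)$ and extended $10w_C$ units further to the left, so the supporting line of every vertical segment of a higher-indexed gadget intersects the interior of $b_3 \in S(x_i)$; by Observation~\ref{obs:stab} no such segment may be folded. For a horizontal segment of $S(x_j)$ with $j>i$, folding along it would reflect the remaining portion of $S(x_j)$ onto the opposite half-plane and land one of its segments either crossing $b_3 \in S(x_i)$ or sharing its supporting line, again forbidden in the restricted setting.

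Second, once a fold is made inside $S(x_i)$, the 13 folds of Lemma~\ref{lem:variable_two_ways} must be executed consecutively. By that lemma the segment $b_3 \in S(x_i)$ can only appear as the very last fold of its gadget, and by the previous paragraph it is precisely this segment that blocks all later gadgets. Hence any attempt to interleave folds of $S(x_i)$ with folds of some $S(x_j)$, $j>i$, is impossible, and the $13i$ folds allotted to $S(x_1), \ldots, S(x_i)$ are completely consumed by these gadgets.

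The main obstacle I anticipate is the geometric bookkeeping needed in the first sub-claim: after folding $S(x_1), \ldots, S(x_{i-1})$ one must verify that the positions of all reflected segments remain controlled, so that no unexpected configuration enables a legal fold of a later gadget. I would handle this by exploiting the horizontal and vertical buffers ($15d_x + 6w_g + 6w_C$ and $2h_x$) built into the placement of consecutive gadgets: these buffers guarantee that each variable gadget folds into a region staying on the same side of its own $b_3$ as it started, so the reflected images of $S(x_1), \ldots, S(x_{i-1})$ never intrude into the area occupied by $S(x_i), \ldots, S(x_n)$, and the stabbing argument against $b_3 \in S(x_i)$ remains valid throughout the induction.
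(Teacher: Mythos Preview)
Your overall strategy matches the paper's: both directions hinge on Lemma~\ref{lem:variable_two_ways} together with the fact that $b_3 \in S(x_i)$ blocks the first fold of $S(x_{i+1})$. The buffer bookkeeping you outline in the last paragraph is exactly what the paper spells out in its first paragraph.

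However, your first sub-claim is stronger than needed and the horizontal part of it is both unnecessary and not clearly correct. You assert that folding a horizontal segment of $S(x_j)$, $j>i$, would land some reflected segment on $b_3 \in S(x_i)$; but $b_3 \in S(x_i)$ sits \emph{below} $S(x_{i+1})$ with vertical slack, and a horizontal fold near the top of $S(x_j)$ reflects material \emph{upward}, away from $b_3$. So the crossing you invoke need not occur. The cleaner route, which the paper takes, avoids this case entirely: by the analysis inside Lemma~\ref{lem:variable_two_ways}, every horizontal segment of a variable gadget is stabbing (the only non-stabbing segments are $t$, $f$, and $b_2$, all vertical), and the presence of additional gadgets can only add stabbing constraints, never remove them. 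Hence the first legal fold in $S(x_{i+1})$ is forced to be vertical, and your vertical argument against $b_3 \in S(x_i)$ already finishes the job. Replacing your horizontal-crossing claim with this observation both simplifies the proof and removes the unjustified step.
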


\begin{proof}
	The leftmost point of any segment in $S(x_1) \setminus \{b_2,b_3 \in S(x_1)\}$ and the rightmost point of any segment in $S(x_2)$ are more than $5w_C$ apart. 
	Consequently, any fold along a vertical segment in $S(x_1)$ leaves the segments in $S(x_2),\dots, S(x_n)$ on the same half-plane relative to the remaining segments in $S(x_1)$.
	For the folds along horizontal segments, the topmost point of any segment in $S(x_1)$ and the bottommost point of any segment in $S(x_2)$ are at least $7h_x$ units apart. 
	Since there are only seven horizontal segments, it is clear that no fold along a horizontal segment in $S(x_1)$ can lead to a segment in $S(x_1)$ stabbing a segment in $S(x_2) \cup \dots \cup S(x_n)$. 
	
	From Lemma~\ref{lem:variable_two_ways} it follows that folding the variable gadgets in the order of the corresponding indices gives a fold in $13 n$ steps, 
	as the sequence of 13 folds chosen for one gadget does not interfere with any other gadget.

	It remains to argue the reverse direction. 
	Assume we have a sequence of~$13n$ folds such that some segment in $s_{i+1} \in S(x_{i+1})$ is folded before some segment $s_i \in S(x_{i})$. 
	We immediately see that $s_{i+1}$ cannot be a vertical segment, since those can only be folded after $b_3 \in S(x_{i})$ was folded. 
	However, by Lemma~\ref{lem:variable_two_ways}, the first segment to be folded in $S(x_{i+1})$ is a vertical segment.
\end{proof}

\subsection{Literal gadgets}
\label{sec:literal}
As will be explained in Subsection~\ref{sec:clause} the clause gadget requires a
specific fold along a vertical line. 
These folds are going to be done along supporting lines of segments corresponding to literals.
Here, we only describe the literal gadgets and their initial placement. 
The functioning will become apparent in the then following sections.
We include a blocker segment that ensures 
all segments in variable gadgets must be folded before any segment in a literal gadget.

Let $c_j$ be a clause and $x_i$ a variable that occurs in $c_j$ and
denote with $z_{i,j}$ the corresponding literal.
For $z_{i,j}$ we create one vertical segment $z \in S(z_{i,j})$ and 
one horizontal segment  $b \in S(z_{i,j})$.
Intuitively, the former segment will be the one that enables the folding of a clause gadget if it is folded into the respective clause's good zone,
while the latter segment is needed to partially fix the order in which
the literal and clause gadgets can be folded.
In the following we call $z \in S(z_{i,j})$ the \emph{literal segment} of $z_{i,j}$ and 
$b \in S(z_{i,j})$ the \emph{literal blocker} of $z_{i,j}$.

We continue by placing the literal segments and blockers.
Roughly, for each literal $z_{i,j}$ corresponding to an occurrence of variable $x_i$ in clause $c_j$, 
the segments in $S(z_{i,j})$ are placed above $\kappa_1$ and 
inside the literal strip of the gadget $S(x_i)$
that corresponds to variable~$x_i$, see the corresponding light-blue strip in Figure~\ref{fig:variable_gadget}. 

In the following fix one literal $z_{i,j}$ and its literal segment $z \in S(z_{i,j})$ and literal blocker $b \in S(z_{i,j})$.
First, we describe the horizontal position of $z$.
We place $z$ with an offset to the true segment $t$ of~$S(x_i)$ of $(16 + \frac{1}{4})w_g + (j-1)w_c$ if $x_i$ appears positive in $c_j$ and at $(20 + \frac{3}{4})w_g + (j-1)w_c$ if $x_i$ appears negated in $c_j$. 
(This means that if $z$ is positive it is placed
\begin{align*}
    &10w_g + (i-1)(15d_x + 6w_g + 6w_C) + (16+1/4)w_g + (j-1)w_C\\
    &= (i-1)15d_x + (6i + 20 +1/4)w_g + (j+5)w_C
\end{align*}
units to the left of $\gamma$ and, similarly, if it is negative it is placed 
$(i-1)15d_x + (6i + 24 +\frac{3}{4})w_g + (j+5)w_C$
units to the left of $\gamma$.)
Finally, to avoid multiple literal segments being placed at the same $x$-coordinate, 
we give $z$ a negative horizontal offset $\delta = \frac{1}{100}w_g$ if $z_{i,j}$ if it is the literal with smallest index $i$ in the clause, 
and a positive horizontal offset of $\delta$ if the literal has the largest index $i$ in the clause.

Second, we need to position $z$ vertically and compute its length.
The bottommost point of $z$ is placed at $w_g$ units above $\kappa_1$.
The length of $z$ is given as $2j \cdot w_g$ if $z_{i,j}$ is a positive literal and
$2j \cdot w_g + w_g$ if $z_{i,j}$ is a negative literal.
It remains to describe the positioning of the literal blocker $b \in S(z_{i,j})$.
We place $b$ at $\frac{1}{2}w_g$ units above $z$ and 
its midpoint at the $x$-coordinate coinciding with the horizontal position of $z$.
Again, to avoid colinearities, we give $b \in S(z_{i,j})$'s verical position an offset $\delta$ 
depending on if $i$ is the lowest, middle, or highest index in clause $c_j$.
Finally, we set $b$ to be $\frac{1}{100}w_g$ units long.

\begin{lemma}
	\label{lem:internal_literal_order}
	Let $F(x_1,\dots,x_n)$ be a 3SAT formula and consider the variable gadgets $S(x_1), \dots, S(x_n)$  and literal gadgets $S(z_{i,j})$ as above, 
	then for $j\in \{1,\ldots,m\}$ the segments in $S(z_{i,j})$ with $i \in \{1,\ldots,n\}$ 
	are folded after all segments in $\cup_{j' < j} S(z_{i,j'})$.
	Moreover, for each $j \in \{1,\ldots,m\}$ literal gadgets $S(z_{i,j})$ corresponding to
	negative literals in $c_j$ are folded before the ones corresponding to positive literals.
\end{lemma}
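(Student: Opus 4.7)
The plan is to derive the lemma by combining the intra-gadget stabbing constraint with a careful analysis of the horizontal supporting lines of all literal blockers, appealing to Observation~\ref{obs:stab} throughout. First I will look at a single literal gadget $S(z_{i,j}) = \{z, b\}$: since $b$ is a short horizontal segment whose midpoint sits at the $x$-coordinate of $z$ and is separated from $z$'s top endpoint by exactly $\tfrac{1}{2}w_g$, the vertical supporting line of $z$ passes through the interior of $b$ while $z$ and $b$ share no common point. Thus $z$ stabs $b$, so inside each literal gadget the blocker must be folded before the literal segment.

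Next I will establish the inter-gadget precedence. Using the explicit expressions from the construction, the horizontal supporting line of $b\in S(z_{i,j})$ lies at height $(2j+\tfrac{3}{2})w_g + \delta$ for a positive literal and $(2j+\tfrac{5}{2})w_g + \delta$ for a negative one, while the top of the vertical segment of $z_{i'',j'}$ sits at $(2j'+1)w_g$ (positive) or $(2j'+2)w_g$ (negative). A direct case analysis on $j,j'$ and the signs then shows which pairs give rise to stabs: the $\delta$-offsets are small enough that they put distinct blockers on pairwise distinct supporting lines without changing the qualitative pattern, and they only have to resolve ties between blockers of the same clause. Together with the intra-gadget rule from the first paragraph, these stabs yield a DAG of precedence constraints on the folding order of the literal segments and blockers.

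I will then conclude by induction on the clause index $j$ and, inside a fixed clause $c_j$, on the sign. At each step I identify the set of segments that are non-stabbing in the current configuration, use Observation~\ref{obs:stab} to conclude that only those can be folded, and verify that the set of eligible segments is exactly the one predicted by the lemma, ruling out any other choice. The inductive step is carried out by noting that once the segments of $S(z_{i,j'})$ for the relevant $j'$ (respectively, of the negative literal gadgets in $c_j$) have been creased, they leave the set $S$, so the corresponding stabbings disappear and the next batch becomes eligible. The main obstacle I expect will be bookkeeping: verifying that the small perturbations $\delta$ never reverse the sign of any of the height comparisons used in the case analysis, and confirming that none of the preceding variable-gadget folds (by Lemmas~\ref{lem:variable_two_ways} and~\ref{lem:variable_order}) permute the literal gadgets relative to each other in a way that would invalidate the stabbing DAG.
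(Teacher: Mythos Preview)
Your proposal is correct and follows essentially the same route as the paper: both arguments hinge on Observation~\ref{obs:stab} applied to (i) the intra-gadget stab of $b$ by $z$, (ii) the stab of literal segments of other clauses by the horizontal supporting line of a blocker~$b$, and (iii) the stab of negative-literal segments by positive-literal blockers within the same clause. The paper dispatches (ii) and (iii) in one line each with the phrase ``by construction,'' whereas you spell out the blocker heights $(2j+\tfrac32)w_g$ versus $(2j+\tfrac52)w_g$ and the top heights $(2j'+1)w_g$ versus $(2j'+2)w_g$, and then wrap the resulting precedence DAG in an explicit induction; you also add two sanity checks (the $\delta$-offsets never flip a comparison, and the preceding variable-gadget folds do not disturb the vertical relations among literal gadgets) that the paper leaves implicit. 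None of this is a different idea, just a more careful execution of the same one.
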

\begin{proof}
	A literal segment $z \in S(z_{i,j})$ can never be folded before $b \in S(z_{i,j})$
	by Observation~\ref{obs:stab}.
	Moreover, $b$ stabs every literal segment in $\cup_{j' < j} S(z_{i,j'})$ 
	by construction.
	Hence, again by Observation~\ref{obs:stab}, $b$ cannot be folded before 
	all literal segments in $\cup_{j' < j} S(z_{i,j'})$ and their respective literal blockers are folded.
	Finally, if $z_{i,j}$ is a positive literal in clause $c_j$ its literal blocker $b$
	stabs any literal segment $z' \in S(z_{i',j})$ where $x_{i'}$ occurs negated in $c_j$.
	Consequently, by Observation~\ref{obs:stab}, the segments in $S(z_{i,j})$ are folded
	after the literal segment in $S(z_{i',j})$.
\end{proof}

To block the segments in literal gadgets to be folded before we have folded all segments in variable gadgets
we introduce one vertical segment $b_z$.
Horizontally we place it $11d_x + 5w_C$ units of horizontal distance to the left of $t \in S(x_n)$.
(Hence, $b_z$ is placed $10w_g + (n-1)(15d_x + 6w_g + 6w_C) + 11d_x + 5w_C$ 
units to the left of $\gamma$.)
Vertically we put its bottommost point on $\kappa_2$ and
extend it vertically by $(2m-1)h_c \cdot (3m + 1) \cdot w_g$ units where $h_c = \frac{59}{2}w_g$ is the height of a clause gadget
which we introduce in the following Subsection~\ref{sec:clause}.

The following observation is immediate after realizing that $b_z$ stabs the blocker $b_3$ of variable $x_n$.

\begin{obs}
	\label{obs:literal_order}
	Let $F(x_1,\dots,x_n)$ be a 3SAT formula and consider the variable gadgets $S(x_1), \dots, S(x_n)$  and literal gadgets $S(z_{i,j})$ as above, then for all literals $z_{i,j}$ the segments in $S(z_{i,j})$ are folded after all segments in $S(x_1),\ldots,S(x_n)$.
\end{obs}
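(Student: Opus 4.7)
The plan is to use the single stabbing relationship highlighted just before the statement and cascade a waiting order through two applications of Observation~\ref{obs:stab}: $(i)$ $b_z$ itself cannot be folded before the variable region is finished, and $(ii)$ every segment of every literal gadget cannot be folded before $b_z$. Chaining the two gives the claim.

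For $(i)$, I would first verify the hinted geometric incidence. The supporting line of $b_z$ is vertical and, by the horizontal offsets chosen in the placement, it passes through the interior of the horizontal segment $b_3 \in S(x_n)$, while $b_z$ and $b_3 \in S(x_n)$ live in disjoint $y$-ranges (above and below $\kappa_2$), so they share no point. Thus $b_z$ stabs $b_3 \in S(x_n)$, and Observation~\ref{obs:stab} forbids folding $b_z$ before $b_3 \in S(x_n)$. By Lemmas~\ref{lem:variable_two_ways} and~\ref{lem:variable_order}, $b_3 \in S(x_n)$ is the very last variable-gadget segment in any length-$13n$ folding of the variable region, so $b_z$ is necessarily folded after every segment of $S(x_1),\ldots,S(x_n)$.

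For $(ii)$, the length chosen for $b_z$ makes its vertical extent reach well above every literal blocker. Any literal blocker $b \in S(z_{i,j})$ is a very short horizontal segment above $\kappa_1$, so its horizontal supporting line crosses $b_z$ at the $x$-coordinate of $b_z$, while the small horizontal extent of $b$ lies at a different $x$-coordinate, so $b$ and $b_z$ share no point. Hence $b$ stabs $b_z$, and Observation~\ref{obs:stab} again forbids folding $b$ before $b_z$. Because Lemma~\ref{lem:internal_literal_order} has already forced the literal segment $z \in S(z_{i,j})$ to be folded after its own blocker $b$, both segments of $S(z_{i,j})$ are folded after $b_z$, and therefore after all of $S(x_1),\ldots,S(x_n)$. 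The only real obstacle is the geometric bookkeeping — making sure the $x$-coordinate of $b_z$ really lies inside the horizontal extent of $b_3 \in S(x_n)$ and that the vertical span of $b_z$ covers every literal blocker — but both are guaranteed by the numerical choices in the construction, so no substantive difficulty is expected beyond a routine check of the constants.
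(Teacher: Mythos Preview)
Your argument is correct and matches the paper's one-line justification, which simply records that $b_z$ stabs $b_3\in S(x_n)$. You make explicit the second link of the chain---that each literal blocker stabs $b_z$ and hence every literal gadget waits for $b_z$---which the paper leaves implicit, but the underlying mechanism is identical.
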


\subsection{Clause Gadget}
\label{sec:clause}
For each clause $c_j$ we introduce a clause gadget
$S(c_j)$ consisting of four \emph{clause segments} $c^1,c^2,c^3$, and $c^4$ and 
one \emph{clause blocker} $b$.
Each clause gadget is confined to a rectangular area with a width $w_c$ of $9w_g$ and a height $h_c$ of $\frac{59}{2}w_g$.
In the following, we describe the positioning of all the segments in a clause gadget relative to its top left corner,
compare also \figurename~\ref{fig:clausepos}(a) and \figurename~\ref{fig:clauseneg}(a).
Segment $c^1$ has its left endpoint $\frac{9}{2}w_g$ units from the left and $4w_g$ units from the top and
its right endpoint at $6w_g$ units from the left and $\frac{5}{2}w_g$ units from the top.
The second segment $c^2$ is positioned with both its endpoints at the same $x$-coordinate of $8w_g + \frac{1}{8}w_g$ and
its upper endpoint is at zero, while its lower endpoint is also at $\frac{5}{2}w_g$ units from the top.
Segment $c^3$ has a length of $\frac{3}{2}w_g + \frac{1}{10}w_g$ to account for the offsets in the literal gadgets
an is positioned $5w_g$ units from the top with its left endpoint at $7w_g$. 
Finally, segment $c^4$ has its upper end point $\frac{11}{2}w_g$ units from the top and its lower endpoint
at $\frac{25}{2}w_g$ units from the top.

It remains to place the blocker $b \in S(c_j)$.
This segment has its right end point at $\frac{1}{4}w_g$ units from the top and $3w_g$ units from the left,
relative to clause gadget $S(c_j)$.
It then extends $12w_g$ units to the left, i.e. $9w_g$ units of $b$ lie outside of the clause gadget $S(c_j)$.

\begin{figure*}
	\centering
	\includegraphics[scale=.9,page=2]{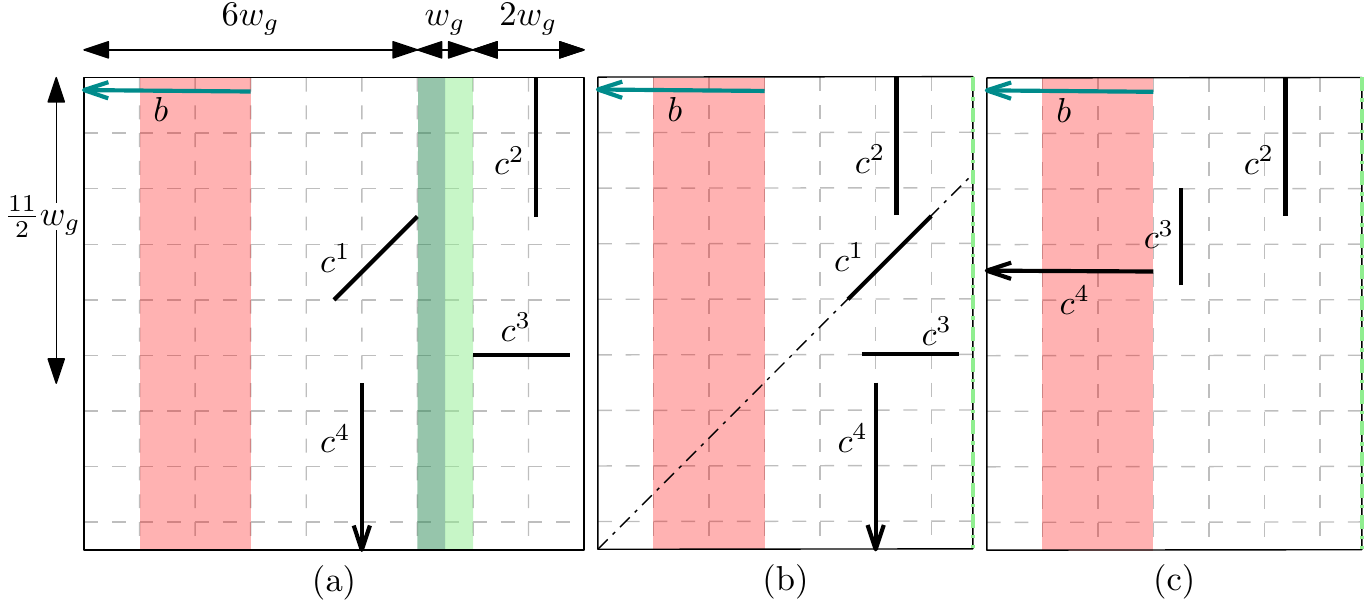}
	\caption{Clause gadget folded along a literal gadget corresponding to a positive literal.}
	\label{fig:clausepos}
\end{figure*}
\begin{figure*}
	\centering
	\includegraphics[scale=.9,page=1]{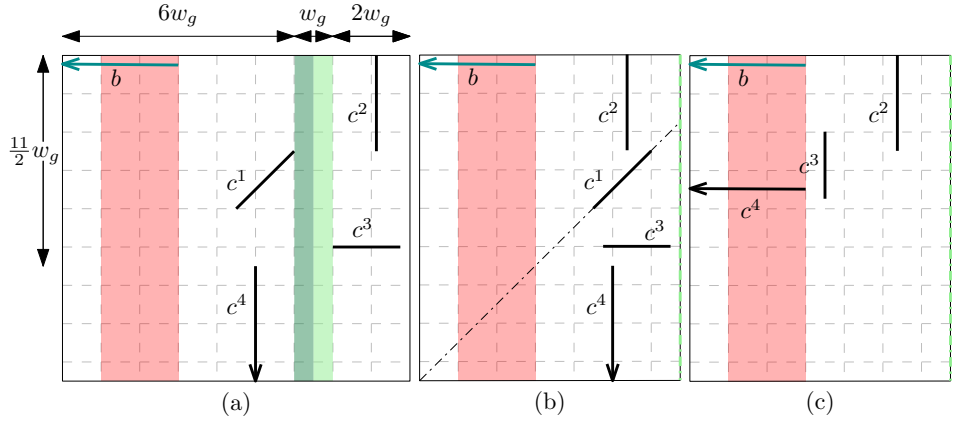}
	\caption{Clause gadget folded along a literal gadget corresponding to a negative literal.}
	\label{fig:clauseneg}
\end{figure*}

Next, we consider the possible folding sequences of a clause gadget
$S(c_j)$ for a clause $c_j$.
First, we observe that of the five segments none
can be folded without some additional sixth segment.
For $c^1$, $c^2$, $c^4$, and $b$ in $S(c_j)$ 
this can be seen immediately by Observation~\ref{obs:stab}
as they each stab another segment.
For $c^3$, observe that folding along it would produce an intersection between 
segment $c^1$ and $c^4$ which is forbidden by Observation~\ref{obs:non_stab}.

In the following we call the vertical strip in $S(c_j)$ in between $6w_g$ to $7w_g$,
but excluding the vertical lines at $6w_g$ and $7w_g$
its good zone (see the green strips in \figurename~\ref{fig:clausepos}(a) or~\ref{fig:clauseneg}(a)).
Moreover, we call the region from $w_g$ to $3w_g$ the \emph{bad zone} of the gadget.
This region is shaded in red in \figurename~\ref{fig:clausepos} and~\ref{fig:clauseneg}.
We now show under which condition it is actually possible to fold the gadget.

\begin{lemma}
	\label{lem:clause_fold}
	A clause gadget $S(c_j)$ with segments $c^1$,$c^2$,$c^3$,$c^4$, and $b$ as above,
	in a setting where no horizontal segment can be folded,
	can be folded in five steps if and only if there is a segment $s$ whose supporting line goes along its good zone.
	Moreover, in steps one and two the folds are 
	$c^1$ followed by $c^2$ and 
	the last fold is along $c^4$ or $b$.
\end{lemma}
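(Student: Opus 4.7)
My plan is to analyze the stabbing and crossing relations among the five segments of $S(c_j)$ and then resolve both directions of the biconditional through a case analysis on the horizontal position of the auxiliary segment $s$.

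First I would verify directly from the coordinates which internal pairs of $S(c_j)$ stand in a stabbing relation: the supporting line of $c^1$ crosses $c^2$, the line of $c^2$ crosses $c^3$, the line of $c^4$ crosses $c^3$, and the line of $b$ crosses $c^4$. Together with the preamble's observation that folding $c^3$ first creates a crossing of $c^1$ and $c^4$, this confirms that no segment of $S(c_j)$ is foldable as a stand-alone set, so the existence of the external $s$ is necessary.

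For the only if direction I would consider a hypothetical folding sequence of length five. Since horizontal folds are forbidden by hypothesis and no clause segment is initially non-stabbing, the first successful fold can occur only after the configuration has been altered by a prior fold along the supporting line of some vertical segment $s$. A case analysis on the horizontal coordinate of $\ell_s$ then shows that the only positions which simultaneously break the $c^1 \to c^2$ stabbing and leave the rest of $S(c_j)$ foldable in the required order are those inside the good zone: positions to the left leave the obstruction between $c^1$ and $c^2$ intact, positions in the bad zone force $\ell_s$ to cross the horizontal blocker $b$ at a non-right angle (forbidden by Observation~\ref{obs:non_stab}), and positions at or to the right of $c^2$ reflect $c^4$ or $c^3$ into configurations that remain stabbing. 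Once the first fold is pinned to $c^1$, a short induction along the remaining sequence shows that $c^2$ is the unique candidate for the second fold by Observation~\ref{obs:stab}, that $c^3$ must come third (otherwise the $b \to c^4$ stabbing cannot be resolved), and consequently the last fold is $c^4$ or $b$.

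For the if direction, given such an $s$ I would exhibit the explicit sequence $c^1 \to c^2 \to c^3 \to c^4 \to b$ and verify at each step that the chosen segment is non-stabbing and that the fold creates no crossing forbidden by Observation~\ref{obs:non_stab}; the only non-routine step is the third one, where the latent crossing between $c^1$ and $c^4$ that would be produced by folding $c^3$ is harmless because $c^1$ has already been removed in step one. The principal obstacle is the case analysis of the only if direction: many positions of $\ell_s$ outside the good zone and many reorderings of the last three folds must be individually ruled out, and for each the entire chain of induced reflections has to be followed through to confirm that some stabbing or illegal crossing eventually appears. I would organize this bookkeeping by tabulating, for each candidate position of $\ell_s$, precisely which pairs of segments inside $S(c_j)$ become stabbing or produce a crossing after the induced reflection.
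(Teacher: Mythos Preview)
Your approach matches the paper's: first establish that no segment of $S(c_j)$ is initially foldable, then for the ``if'' direction exhibit the explicit sequence $c^1 \to c^2 \to \{c^3,c^4,b\}$ after a good-zone fold, and for the ``only if'' direction argue that no vertical fold outside the good zone can break the internal deadlock. The paper's own ``only if'' argument is considerably terser than yours---it essentially asserts that without a supporting line strictly between the stabbing pairs none of the clause segments can ever be freed, and handles the two boundary coordinates $6w_g$ and $7w_g$ separately---so your more explicit case analysis is a reasonable elaboration rather than a different method.

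There are, however, two concrete errors in your stabbing bookkeeping. First, the supporting line of $b$ is the horizontal line at height $\tfrac{1}{4}w_g$ from the top, while $c^4$ lies between heights $\tfrac{11}{2}w_g$ and $\tfrac{25}{2}w_g$; hence $b$ does \emph{not} stab $c^4$---it stabs $c^2$. Second, your bad-zone argument is miscast: a vertical $\ell_s$ and the horizontal blocker $b$ meet at a \emph{right} angle, not a non-right one, and the observation you want is Observation~\ref{obs:stab} (folding along a stabbing segment is forbidden), not Observation~\ref{obs:non_stab}. The correct reason that case fails is simply that such an $s$ stabs $b$, which already rules out the fold in the restricted model regardless of the angle. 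Your case split also leaves gaps: positions strictly between $3w_g$ and $6w_g$ and strictly between $7w_g$ and $8\tfrac{1}{8}w_g$ must still be excluded, which you do by noting that there $\ell_s$ stabs $c^1$ or $c^3$ respectively. With these corrections the argument goes through; also note that the paper allows $b$ to be folded third (only the constraint $c^3$ before $c^4$ is needed), so your claim that $c^3$ must be exactly third is slightly too strong.
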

\begin{proof}
	We first show how $S(c_j)$ can be folded if the supporting line of some segment
	lies in its good zone at either $6.25w_g$ or $6.75w_g$ (this will later be the positions
	of the literal segments introduced in Subsection~\ref{sec:literal}).
	\figurename~\ref{fig:clausepos} details the first two folds if such a vertical segment 
	is in the middle of the darkgreen strip ($6.25w_g$) and
	\figurename~\ref{fig:clauseneg} shows the same for a segment in the middle of the lightgreen strip ($6.75w_g$).
	After folding along any of these strips we next have to fold along segment $c^1 \in S(c_j)$
	by the same arguments as before.
	This leads to the configurations shown in \figurename~\ref{fig:clausepos}(c) and~\ref{fig:clauseneg}(c).
	Here, we are forced to first fold $c^2 \in S(c_j)$.
	The only other non-stabbing segment would be to fold along $c^3 \in S(c_j)$, but
	this would lead to an intersection between $c^2$ and $b \in S(c_j)$ which 
	is not allowed by Observation~\ref{obs:non_stab}.
	Finally, we fold the clause segments $c^3$ and $c^4$ of $S(c_j)$ as well as $b \in S(c_j)$
	in any order such that $c^3$ is folded before $c^4$.
	Hence, we were able to fold $S(c_j)$ in five steps as desired.
	With the same argumentation we can fold $S(c_j)$ for any other supporting line going
	through the good zone of the gadget.
	
	Now assume we could fold $S(c_j)$ without ever folding along a line going 
	through its good zone.
	Clearly if no supporting line lies in the strip between the pairs of 
	clause segments $c^1$, $c^4$ and $c^2$, $c^3$ we can never
	fold any of the clause segments.
	Moreover, if a supporting line would coincide with the vertical line at $6w_g$ or $7w_g$
	it would intersect $c^1$ or $c^3$ and hence, by Observation~\ref{obs:stab} 
	we could not fold this line.
\end{proof}

It remains to position the clause gadgets relative to each other.
Recall, that a clause has width $w_c = 9w_g$ and height $h_c = \frac{25}{2}w_g$.
Also recall, that the leftmost point of a clause gadget is $w_g$ units left of its bad zone and
the rightmost point $w_c$ units from its leftmost point.
Its topmost point is at the same point as the topmost point of its clause segment $c^2$ and
its bottommost point at distance $h_c$ from its topmost point.
Moreover, the first clause gadget $S(c_1)$ has no clause blocker.
We position $S(c_1)$ such that its leftmost point lies on $\gamma$ and 
its bottommost point is $(m-1)2h_c$ units above $\kappa_2$ and $h_c$ units below $\kappa_1$. 
The next clause $S(c_2)$ is placed $h_c$ units below the bottommost point of $S(c_1)$ and such that its leftmost point aligns with the rightmost point of $S(c_1)$.
Consequently, the clause blocker $b \in S(c_2)$ is stabbed by
the clause segments $c^2,c^4\in S(c_1)$.
Continuing in this fashion we place all $m$ clause gadgets.
Note, that the bottommost point of $S(c_m)$ precisely aligns with $\kappa_2$ and
that there is $h_c$ units of vertical space between the topmost point of $S(c_j)$ and
$S(c_{j-1})$ for any $1 < j \leq m$ and that
each clause blocker $b\in S(c_{j})$ for $1 < j \leq m$ is stabbed by 
the clause segments $c^2,c^4 \in S(c_{j-1})$.
We now state the same ordering lemma for the clauses as for the variables,
assuming that there is one vertical segment whose supporting line lies 
inside the good zone for each clause gadget.

\begin{lemma}
	\label{lem:clause_order}
	Let $F(x_1,\dots,x_n)$ be a 3SAT formula and consider the set of segments~$S$ 
	for clauses $c_j \in F(x_1,\ldots,x_n)$ as defined above and 
	assume that for each clause gadget $S(c_j)$ corresponding to a clause $c_j$
	there is exactly one vertical segment whose
	supporting line lies inside the good zone of $S(c_j)$,
	then the clause gadgets can only be folded in order $S(c_m),\ldots,S(c_1)$.
\end{lemma}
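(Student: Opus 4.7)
My plan is to prove the lemma by reverse induction on $j$, from $j=m$ down to $j=2$, showing that the fold of $b \in S(c_j)$ must precede any fold in $S(c_{j-1})$. The core engine of the induction is the stabbing relationship established in the construction: for each $1 < j \le m$, the vertical clause segments $c^2, c^4 \in S(c_{j-1})$ stab the horizontal clause blocker $b \in S(c_j)$. By Observation~\ref{obs:stab}, a solution of length $|S|$ may never fold along a stabbing segment, so neither $c^2$ nor $c^4$ of $S(c_{j-1})$ can be the next fold while $b \in S(c_j)$ is still present.

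Next, I would combine this with Lemma~\ref{lem:clause_fold}. By that lemma, every clause gadget with a blocker requires exactly five folds, and its sequence must begin with $c^1$ followed by $c^2$. Since the total budget is $|S|$ folds and Lemmas~\ref{lem:variable_two_ways}--\ref{lem:variable_order}, Observation~\ref{obs:literal_order}, and Lemma~\ref{lem:clause_fold} allocate exactly the right number of folds to each gadget, the second fold inside $S(c_{j-1})$ must be along $c^2 \in S(c_{j-1})$. The stabbing argument above then forces $b \in S(c_j)$ to be folded before this second fold inside $S(c_{j-1})$. Applying Lemma~\ref{lem:clause_fold} to $S(c_j)$, in turn, requires that $c^1$ and $c^2$ of $S(c_j)$ be folded before $b \in S(c_j)$. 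Hence the first two folds inside $S(c_j)$ strictly precede the first fold inside $S(c_{j-1})$, which inductively gives the claimed ordering $S(c_m), S(c_{m-1}), \dots, S(c_1)$.

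The main obstacle is closing an alternative escape route: ruling out that some external fold reflects $b \in S(c_j)$ out from under the supporting lines of $c^2, c^4 \in S(c_{j-1})$ without $b$ itself being folded. To handle this, I would appeal to Observation~\ref{obs:literal_order} and Lemma~\ref{lem:internal_literal_order}, which already force all variable-gadget folds (and the relative ordering inside the literal gadgets) to be completed before the clause region is activated, together with the hypothesis in Lemma~\ref{lem:clause_fold} that no horizontal segment can be folded during the clause folding phase. Combined, these restrict the available moves to vertical folds through the designated good zones, and the geometry of the clause region (in particular the vertical stacking of the clause gadgets and the placement of each $b \in S(c_j)$ strictly between the supporting lines of $c^2$ and $c^4$ of $S(c_{j-1})$) ensures that no such vertical fold can break the stabbing relationship. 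Once this is verified by inspection of the coordinates, the inductive argument closes.
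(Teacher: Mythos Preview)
Your overall strategy (reverse induction on $j$, driven by the stabbing of the blocker $b\in S(c_j)$) is the same as the paper's, but two concrete geometric observations are missing, and without them your chain of implications does not close.

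First, from ``$c^2,c^4\in S(c_{j-1})$ stab $b\in S(c_j)$'' and ``the second fold inside $S(c_{j-1})$ is $c^2$'' you only get that $b\in S(c_j)$ precedes the \emph{second} fold of $S(c_{j-1})$. Your next sentence jumps to ``the first two folds inside $S(c_j)$ strictly precede the \emph{first} fold inside $S(c_{j-1})$,'' which does not follow: nothing you have said rules out folding $c^1\in S(c_{j-1})$ (or, more to the point, the good-zone vertical segment that must precede it) before $b\in S(c_j)$. The paper fills this gap by observing that the assumed vertical segment in the good zone of $S(c_{j-1})$ \emph{also} stabs $b\in S(c_j)$; since by Lemma~\ref{lem:clause_fold} that vertical fold must precede every fold inside $S(c_{j-1})$, this forces $b\in S(c_j)$ before \emph{any} fold of $S(c_{j-1})$.

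Second, even granting that $b\in S(c_j)$ precedes all of $S(c_{j-1})$, you still have not shown full sequencing: you need \emph{every} segment of $S(c_j)$, in particular $c^3$ and $c^4$, to be folded before $S(c_{j-1})$ starts. Your argument only places $c^1,c^2,b\in S(c_j)$ ahead. The paper closes this with a second geometric observation: after the fold along $c^1\in S(c_j)$, segment $c^4\in S(c_j)$ lands at the same $x$-coordinates as $b\in S(c_j)$, so every vertical segment that stabbed $b$ now also stabs $c^4$. Since Lemma~\ref{lem:clause_fold} says the last fold of $S(c_j)$ is $b$ or $c^4$, both of which are now blocked by the good-zone segment of $S(c_{j-1})$, all of $S(c_j)$ must be completed first. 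Add these two observations and your inductive scheme goes through exactly as in the paper.
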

\begin{proof}
	As just observed the clause blocker $b\in S(c_j)$ is stabbed by the clause segments
	$c^2,c^4\in S(c_{j-1})$ for every $1 < j \leq m$.
	Moreover, this means that the vertical segment in the good zone of each $S(c_{j-1})$
	stabs $b \in S(c_j)$.
	Hence, with Lemma~\ref{lem:clause_fold}, 
	the only segment we can fold in step one is the vertical segment
	whose supporting line lies in the good zone of $S(c_m)$.
	Crucially, also by Lemma~\ref{lem:clause_fold},
	the last segment we fold is either $b$ or $c^4$.
	Moreover, after the fold along $c^1$ segment $c^4$ is
	at the same $x$-coordinates as $b$ and hence any vertical segment that
	was stabbing $b$ is now also stabbing $c^4$.
	Consequently, we can only fold $S(c_{m-1})$ once all segments in $S(c_m)$ are folded.
	
	To complete the proof observe that all segments in any $S(c_j)$ for $1 \leq j < m$ are
	completely in one half-plane relative to the supporting lines of segments in $S(c_m)$
	when we fold along them.
	Hence, their relative positions did not change.
\end{proof}

\subsection{Correctness} It remains to argue the correctness of our reduction.
Theorem~\ref{thm:restricted} then follows directly from Lemma~\ref{lem:correctness}.
\begin{lemma}
	\label{lem:variable_undo}
	Given a 3SAT formula $F(x_1,\ldots ,x_n)$, 
	let $S(x_1), \dots, S(x_n)$ be the variable gadgets constructed as above. 
	Then, after folding $S(x_1),\dots,S(x_i)$, the horizontal distance between $t \in S(x_{i+1})$ and $\gamma$ is $d_x$.
\end{lemma}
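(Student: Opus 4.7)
The plan is induction on $i$. The base case $i=0$ is immediate: by construction $t \in S(x_1)$ is placed $d_x$ units to the left of $\gamma$. For the inductive step, assume that after folding $S(x_1), \ldots, S(x_{i-1})$ the segment $t \in S(x_i)$ lies at horizontal distance exactly $d_x$ from $\gamma$; I aim to show the same for $t \in S(x_{i+1})$ once $S(x_i)$ has been folded. By Lemma~\ref{lem:variable_two_ways}, the legal folding sequences for $S(x_i)$ are the true-first and the false-first order (up to the placements of the horizontal folds along $t_b^3$ and $f_b^3$). The two are mirror-symmetric about the vertical line halfway between $t$ and $f$, so it suffices to analyze the true-first sequence.

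Of the thirteen folds in the true-first order, seven are along horizontal segments ($t_b^1, t_b^2, t_b^3, f_b^1, f_b^2, f_b^3, b_3$) and therefore do not alter any $x$-coordinate; only the six vertical folds along $t, t_h, f, f_h, b_1, b_2$ matter. The key geometric observation is that the horizontal gap between $S(x_{i+1})$ and the vertical segments of $S(x_i)$, together with the internal spacings $w_g$, $w_C$ and the particularly large separation $10 d_x + 5 w_C$ between $b_2$ and $t$, guarantees that at each of these six folds every point of $S(x_{i+1})$ lies strictly on the side of the fold line being reflected. Consequently, the net horizontal motion of each such point is the composition of six reflections of $\mathbb{R}$ at the $x$-coordinates $X^{(1)}, \ldots, X^{(6)}$ of $t, t_h, f, f_h, b_1, b_2$ \emph{at the moment each is folded}, which a direct calculation shows is the translation
\[
x \;\longmapsto\; x + 2\bigl((X^{(2)} - X^{(1)}) + (X^{(4)} - X^{(3)}) + (X^{(6)} - X^{(5)})\bigr).
\]
In particular, the whole block $S(x_{i+1})$ is rigidly translated horizontally, which preserves its internal structure and keeps the inductive hypothesis intact for the next iteration.

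It remains to verify that this translation equals $\Delta := 15 d_x + 6 w_g + 6 w_C$, which by construction is precisely the initial horizontal offset of $t \in S(x_{i+1})$ from $t \in S(x_i)$. This reduces to an explicit tabulation: each $X^{(k)}$ is obtained from the original position of the corresponding segment inside $S(x_i)$ by applying the reflections of all strictly earlier folds in the sequence, and the original horizontal positions of $t, t_h, f, f_h, b_1, b_2$ inside $S(x_i)$ are fixed by the construction. I expect this bookkeeping to be the main (and essentially only) obstacle; the gadget's metric parameters, and in particular the choice of $10 d_x + 5 w_C$ for the distance between $b_2$ and $t$, are engineered precisely so that the alternating sum collapses to $\Delta$. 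Once this is checked, $t \in S(x_{i+1})$ moves from $-d_x - \Delta$ to $-d_x$, completing the induction.
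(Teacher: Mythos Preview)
Your inductive framing and the observation that the six vertical folds act on the $x$--axis as a composition of reflections, hence a translation, is exactly the mechanism the paper exploits; the paper's formula for $h'(b_2,\gamma)$ is nothing but that alternating sum written out. So the overall strategy is essentially the paper's.

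However, there is a real gap: your mirror-symmetry claim is false. The gadget is \emph{not} symmetric about the vertical line midway between $t$ and $f$. From the construction, $h(t,\gamma)=10w_g$, $h(f,\gamma)=12w_g$, $h(b_1,\gamma)=11w_g$, and $h(f_h,\gamma)=h(t_h,\gamma)+w_g$, with $h(t_h,\gamma)$ of order $d_x + w_C$; thus both helpers $t_h,f_h$ sit on the \emph{same} side of the midline, far to the left, and are not interchanged by any reflection. Consequently the true-first and false-first sequences are \emph{not} related by a geometric symmetry, and you cannot dispense with the second calculation. The paper therefore carries out both alternating-sum computations explicitly and verifies that they collapse to the same expression $h(b_2,\gamma)-2h(t_h,\gamma)-2h(f_h,\gamma)-2h(b_1,\gamma)+56w_g$; that equality is a consequence of the specific choice of offsets, not of a symmetry, and has to be checked.

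A second, smaller point: your assertion that ``every point of $S(x_{i+1})$ lies strictly on the side of the fold line being reflected'' at each step is exactly what makes the six-reflection composition meaningful, but you only gesture at the reason. The paper avoids tracking $S(x_{i+1})$ through all six folds by instead tracking $b_2\in S(x_i)$ and observing that $t\in S(x_{i+1})$ and $b_2$ always lie in the same half-plane, so $h(b_2,t)$ is invariant; the final distance is then recovered from $h(b_2,t)-h'(b_2,\gamma)$. Either route works, but yours needs the side-check made explicit.
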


\begin{proof}
	First consider the fold along the segments in $S(x_1)$.
	By Lemma~\ref{lem:variable_two_ways} we know there are four sequences of folds we have to consider.
	We first show that the statement initially holds for folding all segments in $S(x_1)$. 
	
	Let $h(s,t)$ be the horizontal distance between the two vertical segments or lines $s$ and $t$, and
	let $h'(s,t)$ be the horizontal distance between segments or lines $s$ and $t$ after $S(x_1)\setminus \{b_2,b_3\}$ was folded.
	
	We start by showing that, after folding all segments in $S(x_1)\setminus \{b_2,b_3\}$, $h'(b_2,\gamma) = 10w_g + w_C$.
	Note that for the horizontal distance the folds along horizontal segments do not  matter.
	Hence, it suffices to consider a sequence of folds starting with $t \in S(x_1)$ and one starting with $f \in S(x_1)$.
	
	We start with $t \in S(x_1)$, and we use that $h(t,b_1) = 1w_g$: 
	\begin{align*}
	h'(b_2,\gamma)= h(b_2,\gamma) &- 2h(t,\gamma) \\ &- 2(h(t_h,\gamma) - 2h(t,\gamma)) \\ &- 2(h(f,\gamma) - 4w_g)\\ &- 2(h(f_h,\gamma) - 2h(t,\gamma) - 2w_g)\\ &- 2(h(b_1,\gamma) - 4w_g).
	\end{align*}
	
	Since $h(t,\gamma) = 10w_g$ and $h(f,\gamma) = 12w_g$, we get
	\begin{align*}
	h'(b_2,\gamma)= h(b_2,\gamma) &- 20w_g \\ &- 2h(t_h,\gamma) + 40w_g \\ &- 16w_g\\ &- 2h(f_h,\gamma) +44w_g\\ &- 2h(b_1,\gamma) +8w_g\\
	= h(b_2,\gamma) &- 2h(t_h,\gamma) - 2h(f_h,\gamma) \\  &- 2h(b_1,\gamma) + 56w_g.
	\end{align*}
	
	For the sequence starting with $f \in S(x_1)$
	\begin{align*}
	h'(b_2,\gamma) = h(b_2,\gamma) &- 2h(f,\gamma) \\ &- 2(h(f_h,\gamma) - 2h(f,\gamma)) \\ &- 2(h(t,\gamma))\\ &- 2(h(t_h,\gamma) - 2h(f,\gamma))\\ &- 2(h(b_1,\gamma) - 2w_g).
	\end{align*}
	
	Since $h(t,\gamma) = 10w_g$ and $h(f,\gamma) = 12w_g$, we get
	\begin{align*}
	h'(b_2,\gamma) = h(b_2,\gamma) &- 24w_g \\ &- 2h(f_h,\gamma) + 48w_g \\ &- 20w_g\\ &- 2h(t_h,\gamma) + 48w_g\\ &- 2(h(b_1,\gamma) +4w_g\\
	= h(b_2,\gamma) &- 2h(t_h,\gamma) - 2h(f_h,\gamma) - 2h(b_1,\gamma) + 56w_g.
	\end{align*}
	
	It follows that for both sequences the distance between $b_2$ and $\gamma$ will be equal. 
	Furthermore, we can compute this distance. 
	Since $h(b_1,\gamma) = 11w_g$,  $h(f_h,\gamma) = h(t_h,\gamma) + w_g$, 
	and $h(b_2,\gamma) = 11d_x + 5w_c$ arrive to
	\begin{align*}
	h'(b_2,\gamma) &= h(b_2,\gamma) - 4h(t_h,\gamma) + 32w_g\\
	&= 11d_x + 5w_C - 4(d_x + h(t,t_h)) + 32w_g \\
	&= 10d_x + 2w_g + 5w_C - 4(14 + w_C) \\
	&= 4d_x + 6w_g + w_C.
	\end{align*}
	
	To get the lemma, note that since the true segment $t \in S(x_2)$ and $b_2$ were always in the same half-plane in the
	folding sequences above, the distance between $t \in S(x_2)$ and $b_2$ did not change. 
	Hence, the distance between $t \in S(x_2)$ and $\gamma$ after folding along all segments in $S(x_1)$ is \begin{align*} 
	h(b_2,t) - h'(b_2,\gamma) &= 5d_x + 6w_g + w_C - (4d_x + 6w_g + w_C) = d_x. 
	\end{align*}
	Since by Lemma~\ref{lem:variable_order} the variable gadgets $S(x_1),\dots,S(x_n)$ can only be folded in order of
	their indices the above calculation holds for any $S(x_i)$ and $S(x_{i+1})$ in such a sequence.
\end{proof}

\begin{lemma}
	\label{lem:positive_literal}
	
	Given a 3SAT formula $F(x_1,\ldots ,x_n)$, 
	let $S$ be the set of segments constructed by the reduction. 
	Let $\mathcal S$ be a folding sequence of the variable gadgets up to $S(x_{i-1})$.
	Let $S'$ be the segments after a fold along $s \in \{t,f\} \subset S(x_i)$.
	If $s = t$, then for every literal gadget $S(z_{i,j})$ with clause $c_j \in F(x_1,\dots,x_n)$ we have that $S'(z_{i,j})$ is in the good zone of $S'(c_j)$ 
	if $z_{i,j}$ is a positive literal, and in the bad zone otherwise. 
	If instead $s = f$, 
	then for every literal gadget $S(z_{i,j})$ with clause $c_j \in F(x_1,\dots,x_n)$ we have that $S'(z_{i,j})$ is in the good zone of $S'(c_j)$ 
	if $z_{i,j}$ is a negative literal, and in the bad zone of $S'(c_{j+1})$ otherwise. 
	Moreover, it is the only literal segment reflected to this coordinate.
\end{lemma}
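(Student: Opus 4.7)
The plan is to reduce the claim to a direct coordinate calculation, once we have nailed down where every relevant segment sits at the moment the fold along $s$ is applied. The starting point is Lemma~\ref{lem:variable_undo}, which guarantees that $\mathcal{S}$ places the segment $t\in S(x_i)$ exactly $d_x=10w_g$ to the left of $\gamma$. By Lemma~\ref{lem:variable_two_ways} each previously processed variable gadget contributes an even number of vertical reflections (namely six), so the portion of $S(x_i)$ that has not yet been folded, together with all literal gadgets attached to $x_i$, retains its original orientation. In particular $f\in S(x_i)$ is $12w_g$ to the left of $\gamma$, and every literal segment $z\in S(z_{i,j})$ sits at the offset to the left of $t$ prescribed in Section~\ref{sec:literal}, i.e.\ $(16+\tfrac{1}{4})w_g+(j-1)w_c$ if positive and $(20+\tfrac{3}{4})w_g+(j-1)w_c$ if negative (up to the $\pm\delta$ perturbation).

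From here the four cases are handled by reflecting an explicit $x$-coordinate across either the line $x=-10w_g$ (for $s=t$) or the line $x=-12w_g$ (for $s=f$). For $s=t$ and a positive literal the reflected coordinate is $(6+\tfrac{1}{4})w_g+(j-1)w_c$, which lies strictly inside the open interval $\bigl(6w_g,\,7w_g\bigr)$ relative to the leftmost point $(j-1)w_c$ of $S'(c_j)$, i.e.\ in its good zone. The other three cases give reflected coordinates $(1+\tfrac{3}{4})w_g+jw_c$, $(2+\tfrac{1}{4})w_g+(j-1)w_c$, and $(6+\tfrac{3}{4})w_g+(j-1)w_c$ respectively; in each case a short comparison with the interval defining the claimed good zone or bad zone of the claimed clause gadget finishes the case. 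The arithmetic is set up so that the reflections across $t$ and $f$ exactly compensate for the $2w_g$ horizontal distance between these two segments and for the $\tfrac{9}{2}w_g$ difference between the positive and negative literal offsets, which is why swapping $s$ from $t$ to $f$ swaps which polarity of literal ends up in the good zone.

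For the uniqueness claim I would invoke the $\pm\delta$ perturbations of Section~\ref{sec:literal}: within a single clause the three literals come from three different variables and carry three distinct horizontal offsets in $\{-\delta,0,+\delta\}$; literals attached to different clauses are separated by a non-zero multiple of $w_c\gg\delta$; and positive versus negative literals of the same variable differ in their original offset by $\tfrac{9}{2}w_g$, which translates into a gap of $w_c$ between their reflected $x$-coordinates. Consequently at most one literal segment lands at any given $x$-coordinate after the fold along $s$. The only delicate point in writing up the full proof is the parity-of-reflections bookkeeping supporting the first paragraph, which certifies that the carefully chosen offsets really do survive $\mathcal{S}$ with the correct sign; once that is in hand, the rest is a routine arithmetic check.
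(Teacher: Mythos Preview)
Your proposal is essentially the paper's own argument: invoke Lemma~\ref{lem:variable_undo} to place $t\in S(x_i)$ at distance $d_x=10w_g$ from $\gamma$, then handle the four cases $(s,\text{polarity})\in\{t,f\}\times\{+,-\}$ by a direct arithmetic check of the reflected position against the relevant good/bad zone, and finally appeal to the $\pm\delta$ offsets for uniqueness. The only difference is cosmetic: the paper measures everything as a distance from~$t$ (so that, e.g., the good zone of $c_j$ begins $16w_g+(j-1)w_c$ units from~$t$), whereas you work in absolute coordinates relative to~$\gamma$; this is why you need the parity-of-reflections bookkeeping you flag as delicate, while the paper sidesteps the orientation question entirely by keeping all quantities relative to~$t$.
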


\begin{proof}
	For the proof we only need to consider horizontal distances, 
	and thus, folds along vertical segments.
	Let us first not consider the $\delta$ offsets. 
	By Lemma~\ref{lem:variable_undo} we know that the distance between the clauses and $t \in S(x_i)$ remains $d_x = 10w_g$. 
	
	We have to consider four cases. First let $S'$ be the set of segments after a fold along $t \in S(x_i)$. 
	Then, any positive literal segment $S(z_{i,j})$ is reflected to a distance $(16 + \frac{1}{4})w_g + (j-1)w_c$ from $t$. 
	At the same time the good zone of $c_j$ starts at distance $16w_g + (j-1)w_c$ from $t$ and spans $w_g$ units. 
	Hence the segment is reflected into the good zone as desired.
	For a negative literal segment $S(z_{i,j})$ is reflected to a distance 
	\begin{align*}
	\left(20 + \frac{3}{4}\right)w_g + (j-1)w_c  &= \left(20 + \frac{3}{4}\right)w_g + j \cdot w_c - 9w_g \\ &= \left(11 + \frac{3}{4}\right)w_g + j\cdot w_c
	\end{align*} from $t$. 
	At the same time the bad zone of $c_{j+1}$ starts at distance $10w_g + j\cdot w_c $ from $t$ and spans $4w_g$ units.
	Hence the segment is reflected into the bad zone of the next clause gadget.
	
	Now consider a fold along $f \in S(x_i)$ segment. 
	Then for a positive literal $z_{i,j}$ the literal segment of $S(z_{i,j})$ is reflected to $(16 + \frac{1}{4})w_g + (j-1)w_c - 4w_g = (12 + \frac{1}{4})w_g + (j-1)w_c$ to the right of $t \in S(x_i)$ and the bad zone of $S(c_j)$ starts at distance $10w_g + (j-1)w_c$ from $t$ and spans $4w_g$ units. 
	As desired the segment is reflected to the bad zone.
	Finally consider a negative literal $z_{i,j}$, then the literal segment of $S(z_{i,j})$ is reflected to $(16 + \frac{3}{4})w_g + (j-1)w_c $ to the right of $t \in S(x_i)$ and the good zone of $S(c_{j})$ starts at distance $16w_g + (j-1)w_c$ from $t$ and spans $w_g$ units. 
	As desired, the segment is reflected to the good zone of $S(c_j)$.
	
	So far all literal segments belonging to the same clause $c_j$ are reflected to the same coordinates in the good and bad zones. 
	Now consider the $\delta$ offset we added depending on the index of the variable. 
	Since we can assume that no variable appears twice in a clause the indices $i$ of all three literals are different. 
	Hence, to exactly one we add $\delta$, to one we do not add or subtract it and from one we subtract $\delta$. 
	Since $\delta$ is greater zero, but smaller than $\frac{1}{4}w_g$, 
	we get that the three literal segments for a clause $c_j$ are still inside the good and bad zone, but never at the same coordinate.
\end{proof}

The reduction can be computed in polynomial time in $n$: 
for a given 3SAT formula $F(x_1,\ldots,x_n)$, we can construct the set $S$ of 
segments in polynomial time in $n$
using the specified coordinates (if we want integer coordinates we can set $w_g=400$).
Theorem~\ref{thm:restricted} follows from the next lemma.

\begin{lemma}\label{lem:correctness}
	Let $F(x_1,\dots,x_n)$ be a formula of a 3SAT instance and
	$S$ the set of segments constructed from $F$ as above, then $F$ is satisfiable if and only if $S$ can be folded in $\msize{S}$ steps. 
\end{lemma}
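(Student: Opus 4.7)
For the ``if'' direction, I would assume $F$ has a satisfying assignment $\sigma$ and explicitly construct a length-$|S|$ folding sequence in three phases. First, fold the variable gadgets $S(x_1),\dots,S(x_n)$ in index order (as allowed by Lemma~\ref{lem:variable_order}); for each $S(x_i)$, pick the 13-step sequence of Lemma~\ref{lem:variable_two_ways} that begins with $t$ if $\sigma(x_i)$ is true and with $f$ otherwise. Second, fold the global blocker $b_z$. Third, process the clauses from $c_m$ down to $c_1$ in the order dictated by Lemma~\ref{lem:clause_order}, interleaving the literal-gadget folds: for each clause $c_j$, Lemma~\ref{lem:positive_literal} guarantees that some literal segment for a literal satisfied by $\sigma$ has been reflected into the good zone of $S(c_j)$, so it can serve as the enabling segment required by Lemma~\ref{lem:clause_fold} and initiate the five-step fold of the clause segments. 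The remaining literal segments and literal blockers for that clause are folded in the order permitted by Lemma~\ref{lem:internal_literal_order}. A tally of $13n+1+6m+(5m-1)$ folds confirms the total equals $|S|$.

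For the ``only if'' direction, I would take an arbitrary length-$|S|$ folding sequence and read off a satisfying assignment from its initial segment. Lemmas~\ref{lem:variable_order} and~\ref{lem:variable_two_ways} force the first $13n$ folds to process the variable gadgets in index order, each in one of two patterns; define $\sigma(x_i)$ to be true if the pattern for $S(x_i)$ begins with $t$ and false otherwise. Observation~\ref{obs:literal_order} and Lemma~\ref{lem:clause_order} then pin down the rest of the schedule: $b_z$ is folded next, all literal gadgets precede the clauses, and the clauses are processed from $c_m$ downwards. For each clause $c_j$, Lemma~\ref{lem:clause_fold} requires a segment whose supporting line passes through the good zone of $S(c_j)$ to trigger the clause fold. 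Only a literal segment of $c_j$ can play this role, and Lemma~\ref{lem:positive_literal} characterizes precisely when a literal segment lands in that good zone as a function of $\sigma$. Hence the existence of an enabling segment for $c_j$ yields a literal of $c_j$ satisfied by $\sigma$, and $\sigma$ satisfies $F$.

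The hardest part is the careful bookkeeping at the literal/clause interface: the enabling segment of Lemma~\ref{lem:clause_fold} is itself an element of $S$, so its fold must be counted exactly once (inside the literal count) rather than twice, in order to make the total come out to $|S|$ rather than more. A related check is that the non-enabling literal segments (those reflected into bad zones by Lemma~\ref{lem:positive_literal}) can still be folded legally within the sequence without creating the stabbings or crossings forbidden by Observations~\ref{obs:stab} and~\ref{obs:non_stab}; the negative-literals-first ordering inside each clause from Lemma~\ref{lem:internal_literal_order} is the mechanism that keeps these interleavings valid.
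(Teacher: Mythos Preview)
Your proposal follows essentially the same approach as the paper: both directions invoke the same chain of lemmas, and you correctly identify that the crux is the literal/clause interface. One inaccuracy to fix: in the ``only if'' direction you write that ``all literal gadgets precede the clauses,'' but in fact the literal and clause folds are interleaved---the literal segments for $c_j$ are folded together with $S(c_j)$, before any segment in $S(c_{j-1})$ (this is what Lemmas~\ref{lem:internal_literal_order} and~\ref{lem:clause_order} jointly enforce, and it is also how you describe the ``if'' direction). This does not break your extraction of $\sigma$, since what matters is only that each $S(c_j)$ has an enabling segment in its good zone when its turn comes.

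The paper spends most of its proof on precisely the part you flag as hardest: a case analysis (exploiting the normal form from Lemma~\ref{lem:transform} and the assumption that $c_m$ has only positive literals) showing that the leftover literal segments of each clause can always be folded away without creating alignments, crossings, or stabbings of $c^3 \in S(c_{j-1})$. Your sketch correctly names Lemma~\ref{lem:internal_literal_order} as the ordering mechanism, but the actual verification tracks explicit horizontal positions (e.g.\ $(2+\tfrac12)w_g$, $(1+\tfrac34)w_g$, $-\tfrac34 w_g$) of the reflected literal segments after the folds along $c^2$ and $c^3$, distinguishing the sub-cases of one versus two remaining literals and positive versus negative; you have identified where this work goes but not yet carried it out.
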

\begin{proof}
	Let $S$ be the set of segments constructed as above from a 3SAT formula $F(x_1,\dots,x_n)$, and let $\mathcal S$ be a folding sequence of length $\msize{S}$, folding $S$. 
	Combining Lemmas~\ref{lem:variable_order}, \ref{lem:clause_fold} and \ref{lem:clause_order}, as well as Observation~\ref{obs:literal_order}, we get that in $S$ the variable gadgets must all be folded first, then the literal and clause gadgets. 
	Furthermore, by Lemma~\ref{lem:variable_two_ways}, each fold of a variable gadget starts with the $t$ or~$f$ segment for each variable. 
	By Lemma~\ref{lem:positive_literal} we know that a literal segment gets reflected into the good zone of a clause if and only if it appears positive and the $t$ segment is folded first or it appears negated and the $f$ segment of the corresponding variable is folded first. 
	Additionally, by Lemma~\ref{lem:clause_fold}, no clause gadget can be folded in the required number of steps if there is no segment inside its good zone. 
	Moreover, by Lemma~\ref{lem:internal_literal_order} the literal segments corresponding to literals of one clause $c_j$
	have to be folded before any literal segment corresponding to a literal in clause $c_{j-1}$.
	Since by Lemma~\ref{lem:clause_fold} the last segment when folding a clause is $b$ or the, by then, parallel segment $c^4$
	we find that the clause and literal gadgets are folded ordered by the index of their corresponding clause from $m$ to $1$.
	Finally observe that in such a folding sequence no two segments ever cross.

	Two technicalities remain, for which the two transformations done to the 3SAT formula are crucial.
	First, as the last clause $c_m$ is a positive one, it cannot happen that a literal segment is ever reflected to a position that
	would be the bad zone of clause $c_{m+1}$.
	Second, we can only partially fix the order in which the literal segments of one clause are folded.
	Observe, that if $|S|$ folds are sufficient there exists a sequence of folds along literal segments such that either
	the segment we fold is the one that breaks the cyclic dependencies in the clause gadget or
	the segment is on the convex hull.
	Hence, for the latter discussed reverse direction in which we are given a satisfying assignment,
	a folding sequence in $|S|$ steps is easy to construct.
	It remains to proof that we can never fold along a segment such that two literal segments are reflected ontop of each other.
	Here, the transformation by Lemma~\ref{lem:transform} makes the argument much simpler,
	as we only have to consider clauses containing two negative or two positive literals.
	
	We begin by assuming that the next clause gadget that we can fold is $S(c_j)$ and
	consider the two cases if the first literal segment folded that lies in the good zone of $S(c_j)$ corresponds to a positive or a negative literal.
	In the following we measure distances from the rightmost point of $S(c_{j-1})$,
	hence the bad zone of $S(c_j)$ always stars at at $w_g$ units.
	First, assume this first literal segment $z$ corresponds to a positive literal.
	Then, there is at most one other literal segment remaining as
	literal segments corresponding to negative literals are folded before 
	the ones corresponding to positive literals by Lemma~\ref{lem:internal_literal_order} and 
	no clause has three positive literals by Lemma~\ref{lem:transform}.
	Let $z'$ be this remaining literal segment corresponding to a positive literal.
	If $z'$ was reflected to the bad zone of $S(c_j)$ no alignment can happen as $z'$ is only folded after all segments in $S(c_j)$ are folded.
	In the case that $z'$ was initially reflected to the good zone of $S(c_j)$ 
	it is in the following reflected at most along the segments $c^2 \in S(c_j)$.
	After the fold along $c^2$ the segment $z'$ is at horizontal position $(2 + \frac{1}{2})w_g$.
	By Lemma~\ref{lem:positive_literal} the other literal segments in the bad zone of $S(c_j)$ are at $(1 + \frac{3}{4})w_g$.
	Hence, $z'$ is on the convex hull and can be folded, by Observation~\ref{obs:ch}, before any following literal segment as desired.
	
	Second, assume the first literal segment $z$ corresponds to a negative literal.
	Here we have to consider at most two more existing literal segments.
	Assume that only one more literal segment $z'$ exists.
	In case $z'$ corresponds to a positive literal we are done as above.
	If $z'$ corresponds to a negative literal  we can nearly argue as before, with the difference that $z_n$ is reflected along $c^2$ and $c^3$ of $S(c_j)$
	to be reflected to position $(2+\frac{1}{2})w_g$.
	Note, that it could also be folded before $c^3$ in which case $z_n$ is just on the convex hull of the remaining segments.
	In the following assume $c_j$ has three literals.
	Then, two cases are to be considered,
	either the two remaining literal segments both correspond to positive literals or
	to a positive and a negative literal.
	We begin with the latter and let $z_p$ and $z_n$ be the literal segments
	corresponding to a positive and a negative literal of $c_j$ respectively.
	If $z_n$ was initially reflected to the bad zone of $S(c_{j+1})$ we get that
	after folding at $z$ the segment $z_n$ is at $(2+\frac{3}{4})w_g$.
	Since $z_n$ has to be folded before $z_p$ this is not a problem if $z_p$ lies in the bad zone of $S(c_j)$.
	In case $z_p$ lies in the good zone though it will be reflected to $-\frac{3}{4}w_g$ after folding along $z_n$.
	Consequently, $z_p$ will then stab the segment $c^3 \in S(c_{j-1})$ and we cannot proceed as 
	no literal segment in the good zone of $S(c_{j-1})$ can be folded before $z_p$ is folded.
	Hence, $z_n$ had to be folded before $z$.
	Now, consider that $z_n$ was initially also in the good zone of $S(c_j)$, 
	but then after folding along $z$ we find that $z_n$ is on the convex hull of the remaining segments and 
	furthermore it has to be folded before $z_p$ by Lemma~\ref{lem:internal_literal_order}.
	The remaining segment $z_p$ can then be treated as above in the case of just one further literal segment.
	Lastly, we consider the case that both remaining literal segments correspond to positive literals in $c_j$ and
	let $z_1$ and $z_2$ be the two literal segments.
	If both are initially in the bad zone of $S(c_j)$ no alignments are possible as they are both only folded after all segments in $S(c_j)$.
	If both are initially in the good zone of $S(c_j)$ we can treat them as if only one literal segment corresponding to a positive literal was remaining.
	Finally, assume that, w.l.o.g., $z_1$ was initially reflected to the good zone of $S(c_j)$ and $z_2$ to the bad zone.
	Now, $z_1$ is either folded before $z_2$ or is as above reflected along $c^2 \in S(c_j)$ and $c^3 \in S(c_j)$
	to $(2 + \frac{1}{2})w_g$.
	The only remaining case is that $z_2$ is folded before $z_1$, but then 
	similarly to above $z_1$ is reflected to a position of $-(1+\frac{3}{4})w_g$ and 
	intersects the segment $c^3\in S(c_j)$.	
	As a result, we find a satisfying assignment of the variables of $F$ by simply setting $x_i$ to true if in $\mathcal S$ the true segment~$t \in S(x_i)$ was folded before $f\in S(x_i)$ and to false if the converse holds.

	For the reverse direction, let $F(x_1,\dots, x_n)$ be again a 3SAT formula and~$\mathcal F$ a fulfilling assignment. 
	Now let $S$ be the set of segments constructed as above. 
	We construct a folding sequence $\mathcal S$ as follows. 
	For every variable gadget $S(x_i)$ pick $t \in S(x_i)$ to be folded first if $x_i$ is set to true in $\mathcal F$ and pick $f \in S(x_i)$ otherwise. 
	It follows from Lemmas~\ref{lem:variable_order},~\ref{lem:internal_literal_order},~\ref{lem:clause_fold}, and~\ref{lem:clause_order}, and Observation~\ref{obs:literal_order}, that this fixes the folding sequence. 
	Assume that $\msize{\mathcal S} \neq \msize{S}$ and there is a fold in $\mathcal S$ which splits another segment, aligns them, or leads to a crossing. 
	By the way we picked $\mathcal S$ none of these three cases can occur when we fold the~$n$ variable gadgets $S(x_1),\dots, S(x_n)$.
	Hence, the fold splitting, aligning, or crossing two segments must happen when the clause gadgets are folded, but then, with Lemma~\ref{lem:clause_fold}, this can only happen in a clause $c_j$, for which none of the corresponding literal segments $S(z_{i,j})$ was in the good zone of $S(c_j)$. 
	However, since $\mathcal F$ is satisfying, this cannot be the case by Lemma~\ref{lem:positive_literal}. 
\end{proof}

\subsection{Modifying the construction}\label{sec:gp}

We showed that deciding if there is a solution to the restricted segment folding problem for an instance $S$ with at most $|S|$ folds 
is NP-hard. 
Recall that in this setting we do not allow folds along stabbing segments or folds that produce crossings. 
The reduction presented above uses  perpendicular directions 
and constructs an instance that is not in general position, so we cannot directly remove the folding restrictions. 

If we look at the coordinates of the endpoints of $|S|$ segments in $\mathbb{R}^{4|S|}$,
the points corresponding to configurations that are not in general position form a set of measure zero.
Thus, small random perturbations would \emph{almost surely} ensure 
that a modified instance is in general position (even after any $|S|$ folds). 
We next show how to deterministically perturb an instance $S$ produced by the reduction presented above so that a folding sequence of length $|S|$ solving the segment 
folding problem does never fold along stabbing segments or make folds that produce crossings.

We perturb the directions of all the segments in the construction by a very small amount that does not change the desired behaviour of the reduction but prevents that two or more segments are folded simultaneously in an optimal solution. 
We encode the segments as a tuple consisting of an endpoint, the length, and an angle. 
Setting $w_g=400$ in the above reduction produces a set of segments whose endpoints have integer coordinates. 
Thus, to encode an endpoint and the length we can use the values of the construction, 
with one exception: 
the only segments that are not aligned in the vertical or in the horizontal directions 
are the $c^1$ segments in the clause gadgets. 
To encode them we use the (scaled) integer coordinates of their rightmost endpoint 
and set the length to $2w_g=800$ (instead of $\frac{3}{2}\sqrt{2}w_g$). 
This change does not affect the behaviour of the clause gadget. 

For the angles, we include small perturbations in the following manner. 
We use $O(1)+ (|S|+1)^2$ bits (we use a binary encoding of the angles, thus showing weakly NP-hardness). 
The most significant bits encode the direction defined in the construction above, 
that is, one of the eight possible angles. 
The remaining $(|S|+1)^2$ bits encode the small perturbations. 
We divide the $(|S|+1)^2$ bits into $|S|$ groups of $|S|+1$ consecutive bits each 
and assign a group to each segment. 
For a segment, the perturbation of the angle consists of setting the rightmost bit of its group to~$1$ 
(and the remaining $(|S|+1)^2-1$ bits are set to $0$). 
When folding along a segment $s$, all reflected segments keep their perturbation bits intact, 
except that in the group corresponding to $s$ they get new bits (it gets summed the bits in that region for $s$ shifted one position to the left). 
Note that by folding $|S|$ times the leading $1$ bit in a group can move at most $|S|$ positions to the left and thus it cannot overflow into another group. 
In particular, this implies that two segments cannot be aligned unless both of them (or parts of them) have been folded before. 
Thus, if there is a folding sequence of length $|S|$ it must fold each segment exactly once. 
This completes the modification of the reduction and the proof of Theorem~\ref{thm:nphardness}. 

\section{Conclusions}

We have proved that given a set of segments finding the sequence minimizing the number of folds that solves the segment folding problem is weakly NP-hard. 
We also showed that the restricted segment folding problem, in which only folds along lines that do not intersect the interior of segments and do not produce crossings are allowed, 
is strongly NP-hard. 
Though optimizing the number of folds in these problems is hard, 
it might be possible that all instances of the segment folding problem can be solved in a polynomial number of folds in the number of segments. 
In Figure~\ref{fig:inf} we presented an instance in which an infinite sequence of folds is possible, but the shortest sequence of folds is finite. 
It is an interesting open problem to find upper bounds on the length of the shortest sequence of folds in terms of the number of segments in the instance, or to show that this number cannot be bounded.  

We conjecture that the segment folding problem is strongly NP-hard. 
To prove this using our reduction we would need a deterministic perturbation of any instance $S$ that admits a polynomial encoding in unary. 
We note that the vertical separation between gadgets can be increased without altering the correctness proof of the reduction. 
However, the horizontal distances are more delicate, since we need to carefully land in the good and bad zones of clause gadgets. 
For that we use a precise reset of each variable gadget. 
Thus, to keep the coordinates small, instead of relying on a deterministic perturbation the endpoints of the segments, we could make copies of all the segments. 
One perturbation that we can deterministically do without blowing the coordinate sizes more than a polynomial in $|S|$, is to alter the length of all the segments so that no two of them have the same length. 
Then a possible strategy would be to make copies of each segment in $S$ close to it but slightly tilted, so that folding along a segment stabbing a segment in $S$ and its copies produces too many non-perpendicular crossings.

Finally, the segment folding problem admits a natural generalization in which folds along arbitrary lines are allowed:
\begin{center}
\begin{boxedminipage}{0.98 \columnwidth}
\textsc{Generalized Segment Folding Problem} \\ [5pt]
\begin{tabular}{l p{0.80 \columnwidth}}
Input: & A set $S$ of line segments $s_1,s_2,\ldots,s_{|S|}$ in the plane and an integer $\kappa$.\\
Operation: & Fold along any line.\\
Question: & Is there a sequence of $\kappa$ operations such that all (parts of) segments $s_i$ become folding lines?
\end{tabular}
\end{boxedminipage}
\end{center}
The computational complexity of this problem as well as bounds on the number of folds needed to solve any instance are other interesting open questions. 

\bibliographystyle{plainurl}
\bibliography{references_folding}

\end{document}